\setlist[enumerate,1]{label=(\arabic*), ref=(\arabic*)}
\setlist[enumerate,2]{label=\alph*), ref=(\arabic{enumi}\alph*)}
\newtheorem{theorem}{Theorem}
\newtheorem{fact}{Fact}
\newtheorem{lemma}{Lemma}
\newtheorem{definition}{Definition}
\title{An Acceptance Semantics for Stable Modal Knowledge\\
\normalsize Extended Abstract}
\author{Peter Hawke
\institute{Philosophy Department, Lingnan University, Hong Kong}
\email{peterhawke@ln.edu.hk}
}
\begin{document}
\maketitle

\begin{abstract}
We observe some puzzling linguistic data concerning ordinary knowledge ascriptions that embed an epistemic (im)possibility claim. We conclude that it is untenable to jointly endorse both classical logic and a pair of intuitively attractive theses: the thesis that knowledge ascriptions are always veridical and a `negative transparency' thesis that reduces knowledge of a simple negated `might' claim to an epistemic claim without modal content. We motivate a strategy for answering the trade-off: preserve veridicality and (generalized) negative transparency, while abandoning the general validity of contraposition. We survey and criticize various approaches for incorporating veridicality into \textit{domain semantics}, a paradigmatic `information-sensitive' framework for capturing negative transparency and, more generally, the non-classical behavior of sentences with epistemic modals. We then present a novel information-sensitive semantics that successfully executes our favored strategy: \textit{stable acceptance semantics}.
\end{abstract}

\section{\textit{Introduction}}\label{puzzle-strategy}

In this paper, we are concerned with the semantics and logic of ordinary knowledge ascriptions that embed an epistemic (im)possibility claim.

\begin{enumerate}
\item Ann knows that it might be raining.\label{opening-example2}
\item Ann knows that it can't be raining.\label{opening-example3}
\end{enumerate}

It is natural to interpret the modals here as having an \textit{epistemic} flavor. Intuitively, \ref{opening-example2} communicates (perhaps \textit{inter alia}) that Ann's knowledge leaves it open that it is raining; \ref{opening-example3} communicates (perhaps \textit{inter alia}) that Ann's knowledge rules out that it is raining. In support, notice how jarring the following sound:

\begin{enumerate}[resume]
\item \# Ann knows that it might be raining and Ann knows that it isn't raining.\label{opening-example4}
\item \# Ann knows that it can't be raining and for all Ann knows, it is raining.\label{opening-example5}
\end{enumerate}

Note that \ref{opening-example2} and \ref{opening-example3} also provide evidence of the systematic \textit{shiftiness} of ordinary epistemic modals. Compare a bare might claim:

\begin{enumerate}[resume]
\item It might be raining.\label{opening-example2a}
\end{enumerate}

In this case, the modal is most naturally taken to communicate that the knowledge of the \textit{speaker} (who need not be Ann) leaves it open that that is raining. As evidence, note the incoherence of the following so-called (and much discussed) \textit{epistemic contradiction} (cf. \cite{Veltman1996},\cite{Yalcin2007}).

\begin{enumerate}[resume]
\item \# It might be raining and it isn't raining.\label{opening-example4a}
\end{enumerate}

The first aim of the present paper is to highlight some unusual and subtle logical features that attitude ascriptions like \ref{opening-example2} and \ref{opening-example3} plausibly display (\S \ref{linguistic-data} and \S \ref{section-strategy}), in particular in interaction with bare modal claims like \ref{opening-example2a}. The second aim is to propose a novel formal semantics that successfully predicts these features (\S \ref{section-bluntness-redux}), in contrast to a salient rival theory (\S \ref{section-rivals}). The resulting theory is of linguistic, technical, and philosophical interest. On the linguistic side, we combine novel and known linguistic data to motivate a new entry in the tradition of `information-sensitive' semantics for ordinary epistemic modals (cf. \cite{Veltman1985}, \cite{Veltman1996}, \cite{Yalcin2007}, \cite{MacFarlane2014}, \cite{Hawke2016}, \cite{Hawke2021}, \cite{Aloni2022}), extending a standard `state-based' account with a novel semantics for knowledge ascriptions. On the technical side, our system displays intriguing and striking non-classical logical behavior, motivating a fuller technical study of the underlying epistemic logic and its interactions with modals (cf. \cite{Dabrowski1996}, \cite{Puncochar2015},\cite{Yang2017a}). On the philosophical side, our semantics may be viewed as a new development in the expressivist tradition for epistemic vocabulary (cf. \cite{Yalcin2011}) that treats assertion conditions as primary in semantics (cf. \cite{Schroeder2008a}).

\section{\textit{Linguistic Evidence for Transparency and Veridicality}}\label{linguistic-data}

We work with formal language $\mathcal{L}$, intended to formalize the relevant fragment of declarative English. We use $\varphi$ and $\psi$ for arbitrary formulas. Intuitively, read $K_a \varphi$ as `Agent $a$ knows that $\varphi$' (with $a \in \{1, 2, \ldots, n\}$) and read $\diamond \varphi$ as `It might be that $\varphi$'. We take atoms $p$ and $q$ to be declaratives without logical vocabulary (we include $\diamond$ in the logical vocabulary). We use $\vdash$ to denote entailment and $\equiv$ for logical equivalence, relative to our intended reading of $\mathcal{L}$. With this in mind, there are reasons to the think that the following principles are sound, and should be recovered by a formal semantics that aims to honor our intended reading of $\mathcal{L}$.

\medskip

\begin{tabular}{l l}
\textbf{Negative Transparency (NTrans):} & $K_a\neg \diamond p \equiv K_a\neg p$\\
\textbf{K-veridicality (Ver):} & $K_a \varphi \vdash \varphi$\\
\end{tabular}

\medskip

As evidence, note that the following bare assertions (easily multiplied) have an air of incoherence. 

\begin{enumerate}[resume]
\item \# Ann knows that Bob can't be here but, for all she knows, he is. (cf. \ref{opening-example4}) \label{example1}
\item \# Ann knows that Bob isn't here but, for all she knows, he might be. \label{example2}
\item \# Bob can't be here, but Ann mistakenly knows that he might be. \label{example3}
\end{enumerate} 

Compare \ref{example3} to the benign `Bob can't be here, but Ann mistakenly believes that he might be'. \textbf{NTrans} predicts that \ref{example1} and \ref{example2} are contradictory; \textbf{Ver} predicts that \ref{example3} is contradictory.

As further evidence, note the difficulty in distinguishing the information communicated by the following in conversation:

\begin{enumerate}[resume]
\item \# For all Ann knows, Bob is here. \label{example4}
\item \# For all Ann knows, Bob might be here. \label{example5}
\end{enumerate} 

\ref{example4} and \ref{example5} seem to say the same thing: nothing that Ann knows rules out that Bob is here. Assuming that `for all Ann knows, $\varphi$' is formalizable as `$\neg K \neg \varphi$',  \textbf{NTrans} predicts this equivalence, as it entails (with minimal further assumptions) that $\neg K \neg p$ is equivalent to $\neg K \neg \diamond p$.

Observations of the above sort are not without precedent. \textbf{Ver} is orthodox (though it is notable, as \ref{example3} seems to demonstrate, that \textbf{Ver} is undisturbed by modal content). \textbf{NTrans} is related to \L ukasiewicz' principle (i.e., $\neg p \vdash \neg \diamond p$), which is in turn related to the much-discussed incoherence of `epistemic contradictions' (i.e, claims of the form $\neg p \wedge \diamond p$ or $p \wedge \diamond \neg p$)  \cite{BledinLando2018,Yalcin2007}. 

Combining \textbf{NTrans} and \textbf{Ver} with classical logic has untoward effects. To see this, first note a seemingly benign consequence of \textbf{NTrans} and \textbf{Ver}.

\begin{fact}
\textbf{NTrans}+\textbf{Ver} entails \textbf{Epistemic \L ukasiewicz (ELuk)}: $K_a \neg p \vdash \neg \diamond p$
\end{fact}

\begin{proof} $ K_a \neg  p  \; \underset{\mathtt{NTrans}}{\vdash} \; K_a \neg \diamond p \; \underset{\mathtt{Ver}}{\vdash} \; \neg \diamond p$\end{proof}

There is \textit{prima facie} evidence that \textbf{ELuk}  is an apt principle on our intended reading of $\mathcal{L}$. Consider:

\begin{enumerate}[resume]
\item Ann knows that it isn't raining. So, it can't be raining. \label{example6}
\item Ann has conclusively established that it isn't raining. So, it must not be raining.\label{example7}
\item \# Bob knows that it isn't snowing, but it might be. \label{example8}
\item \# Bob has conclusively established that it isn't snowing, but it might be. \label{example9}
\end{enumerate} 

\ref{example6} seems like unobjectionable ordinary reasoning (to bolster this, the effect seems heightened when considering the closely related reasoning in \ref{example7}). \ref{example8} has an air of incoherence (as does the closely related \ref{example9}). \textbf{ELuk} explains both. But combining \textbf{ELuk} with unfettered classical logic has puzzling results. Consider:

\medskip

\begin{tabular}{l l}
\textbf{Double Negation (DN):} & $\neg \neg \varphi \equiv \varphi$\\
\textbf{Contraposition (Con):} & $\varphi \vdash \psi$ implies $\neg \psi \vdash \neg \varphi$\\
\end{tabular}

\medskip

\begin{fact}
\textbf{ELuk}+\textbf{Con}+\textbf{DN} entails \textbf{Uniformity I}: $\diamond p \vdash \neg K_a \neg p$
\end{fact}

\begin{proof} $ \diamond p  \; \underset{\mathtt{DN}}{\vdash} \; \neg \neg \diamond p \; \underset{\mathtt{ELuk+Con}}{\vdash} \; \neg K_a \neg p $. \end{proof}

\begin{fact}
\textbf{Uniformity I}+\textbf{Ver} entails \textbf{Uniformity II}: $K_a \diamond p \vdash \neg K_b \neg p$
\end{fact}

\begin{proof} $ K_a\diamond  p  \; \underset{\mathtt{Ver}}{\vdash} \; \diamond p \; \underset{\mathtt{Uni}}{\vdash} \; \neg K_b \neg p $.\end{proof}

\textbf{Uniformity I} and \textbf{II} seem invalid, egregiously implying that if an agent is aware of but rightly uncertain about $p$, \textit{every} agent is uncertain about $p$. To see this, note that \textbf{Uniformity I} (with minimal assumptions) entails: $\diamond p \wedge \diamond \neg p \vdash \neg K_a \neg p \wedge \neg K_a p$. But `it might be raining and might not be raining' predominantly serves to express the \textit{speaker's} ignorance about the rain, while `Jones doesn't know that it is raining and doesn't know that it isn't raining' expresses that \textit{Jones} is ignorant: it is generally agreed that $\diamond p$ either has a solipsistic reading as its default, or something close (e.g.., expression of the information state of a select group of agents that includes the speaker). Similarly, note that \textbf{Uniformity II} (with minimal assumptions) entails: $K_1 \diamond p \wedge K_1 \diamond \neg p \vdash \neg K_2 \neg p \wedge \neg K_2 p$. But `Smith knows it might be raining and might not be raining' predominantly serves to express \textit{Smith's} ignorance about the rain, while `Jones doesn't know that it is raining and doesn't know that it isn't raining' predominantly serves to express that \textit{Jones} is ignorant.

To bolster this assessment, consider a banal context. Suppose that your dinner partner has a severe allergy to shellfish. You ask your waiter, Smith, `Does the daily soup contain shellfish?'. Smith replies:

\begin{enumerate}[resume]
\item It might. The kitchen usually puts shellfish in the soup, but not always. I'll check with Chef Jones. She always knows exactly what's in the soup.\label{example10}
\end{enumerate} 

Upon hearing \ref{example10}, and waiting for Smith to return, one would normally happily accept/say all of: 

\begin{enumerate}[resume]
\item The soup might have shellfish (that's why Smith is checking with the kitchen).\label{example10a}
\item Smith knows that the soup might have shellfish.\label{example10b}
\item Unlike Smith, Jones knows whether the soup has shellfish.\label{example10c}
\end{enumerate} 

It would be odd to conclude from \ref{example10a} and \ref{example10b}, per \textbf{Uniformity}, that Jones doesn't know that the soup doesn't have shellfish. For then an uncontentious application of disjunctive syllogism, using \ref{example10c}, would yield (even before Smith returns): chef Jones knows that the soup has shellfish. Surely one shouldn't conclude \textit{this} given only \ref{example10}. 

The general pattern here is emulated by other epistemic vocabulary. Let's use $\triangledown \varphi$ for `it is likely that $\varphi$'. Then $K_a\neg p \vdash \neg \triangledown p$ (and $K_a\neg p \vdash K_a \neg \triangledown p$) is similarly well-supported by \textit{prima facie} linguistic evidence, while the contrapositive $\triangledown p\vdash \neg K_a \neg p$ does \textit{not} seem true. Compare:

\begin{enumerate}[resume]
\item Ann knows that it isn't raining. So, it isn't likely to be raining. \label{example11}
\item Ann knows that it isn't raining. So, Ann knows that it isn't likely to be raining. \label{example12}
\item It is likely to rain tomorrow, but only our local metereologist Jones knows for sure.\label{example13}
\end{enumerate} 

\ref{example11} and \ref{example12} strike me as good, if redundant, reasoning (easily generalized), while \ref{example13} seems perfectly intelligible.

One style of response to all this tries to exploit the context-sensitivity of epistemic `might' to preserve restricted versions of \textbf{Ver} and \textbf{NTrans} without abandoning classical logic. In particular, the strategy would be to say that \textbf{NTrans}, \textbf{ELuk}, and \textbf{Uniformity I} hold only when the `$\diamond$' deployed in $K_a \neg \diamond p$ and $\neg \diamond p$ is indexed to the information available to agent $a$ (i.e., the same agent referred to in $K_a \neg p$). This is best expressed by enriching the syntax for $\mathcal{L}$, to record the agent each instance of $\diamond$ is indexed to:

\medskip

\begin{tabular}{l l}
\textbf{Restricted NTrans:} & $K_a\neg \diamond_a p \equiv K_a\neg p$\\
\textbf{Restricted ELuk:} & $K_a \neg p \vdash \neg \diamond_a p$\\
\end{tabular}

\medskip

 It may then be claimed that any ill results (e.g. unrestricted \textbf{Uniformity}) leading from  \textbf{Con} and \textbf{DN} are a mere illusion brought on by subtle shifts in context. This strategy should not be dismissed out of hand. Nevertheless, its execution will not be trivial. Among other complications, it sits uneasily with the data collected above (for example, our intuitive assessment of claims \ref{example6}-\ref{example9}, in support of \textbf{ELuk}, does \textit{not} seem to hinge on taking `might'/`can't'/`must' to be indexed to Ann/Bob's information specifically) and risks introducing such loose criteria for contextual shifts that the relevant explanations become bereft of content.

To bolster the alternative strategy of dropping classical logic (at least when epistemic modals are in play), note that independent motivation for rejecting \textbf{Con} has been tabled. For example, one might think that the empirical case for \L ukasiewicz' principle is compelling (cf. \cite{Bledin2014}) and argue on this basis that \textbf{Con} must be false (given that $\diamond p \vdash p$ is obviously false). Alternatively, a proposed counterexample to modus tollens from \cite{Yalcin2012c}, utilizing `likely', is easily modified to bear against \textbf{Con}. Suppose an urn contains 100 marbles, big and small. Of the big, 10 are blue and 30 are red. Of the small, 50 are blue and 10 are red. A marble, $m$, is randomly selected and placed under a cup. Given only this information, \ref{marble1} sounds like good reasoning, but \ref{marble2} does not:

\begin{enumerate}[resume]
\item Suppose that $m$ is big. It follows that $m$ is likely to be red. \label{marble1}
\item $m$ isn't likely to be red. \# Thus, $m$ isn't big.\label{marble2}
\end{enumerate}

To see why the second inference in \ref{marble2} seems incorrect, note that we \textit{already} know that the marble isn't likely to be red, yet accepting that it isn't big is rash.

The current paper thus pursues the strategy of giving an independently motivated formal semantics that delivers \textbf{Ver} and \textbf{NTrans}, while invalidating \textbf{Uniformity (I)} and invalidating \textbf{Con}.

We add one last wrinkle to our list of logical desiderata: it seems that \textbf{NTrans} can be generalized (in ways that bear on our discussion). Consider:

\medskip

\begin{tabular}{l l}
\textbf{Generalized Negative Transparency (GeNT):} & $K_a \neg (p \wedge \diamond q) \equiv K_a \neg (p \wedge q) $\\
& $K_a (p \vee \neg \diamond q) \equiv K_a (p \vee \neg q) $\
\end{tabular}

\medskip

In both cases, \textbf{NTrans} is a special case (respectively, $p = \top$ and $p = \bot$). For convenience, I assume the above claims are equivalent (they could be deployed individually in our coming argumentation, however). Note that the linguistic evidence in support of \textbf{GeNT} seems no worse than that for \textbf{NTrans} (though, unsurprisingly, parsing the relevant sentences requires slightly more effort). Consider:

\begin{enumerate}[resume]
\item \# Ann knows that it isn't both raining and a good day for a picnic, but for all she knows it's both raining and might be a good day for a picnic. \label{example14}
\item \# Ann knows that either it isn't raining or must not be a good day for a picnic, but for all she knows it's both raining and a good day for a picnic. \label{example15}
\item Ann knows that it isn't both raining and a good day for a picnic. So, Ann knows that either it isn't raining or it must not be a good day for a picnic. \label{example16}
\end{enumerate} 

\ref{example14} and \ref{example15} sound incoherent; \ref{example16} sounds like good reasoning. \textbf{GeNT} explains all this.

\section{\textit{Strategy}}\label{section-strategy}

Altogether, our target in the current paper is this:

\begin{center}
\textit{Goal:} Provide an independently motivated formal semantics that validates \textbf{Ver} and \textbf{GeNT} (with \textbf{NTrans} as a special case), and invalidates \textbf{Uniformity I}.
\end{center}

We proceed as follows. In \S \ref{section-rivals}, we consider the \textit{domain semantics} of \cite{Yalcin2011} and \cite{MacFarlane2014}, a standard `information-sensitive' semantics for `might' claims (designed to account, in particular, for non-classical behavior induced by epistemic contradictions). Equipping domain semantics with an account of attitude ascriptions presented by \cite{Yalcin2011} (following \cite{Hintikka1962} and \cite{Dabrowski1996}) delivers \textbf{NTrans}. A natural starting point is thus to ask if \textbf{Ver} and \textbf{GeNT} can be realized in this setting without fuss. However, \textit{ad hoc} maneuvers aside, this system forces a choice between \textbf{NTrans} and \textbf{Ver}. What's more, even \textit{with} said \textit{ad hoc} maneuvers, the system fails to deliver \textbf{GeNT}. 

\S \ref{section-bluntness-redux} thus proposes a novel alternative theory, showcasing a related but distinct tradition of information-sensitive semantics: we propose a formal \textit{acceptance semantics} (in the ballpark of \cite{Veltman1985},\cite{Schroeder2008a}, \cite{Hawke2016},\cite{Hawke2021}, \cite{Ciardelli2021}, \cite{Aloni2022}) that delivers \textbf{Ver} and \textbf{GeNT} as desired. Our treatment of $\diamond p$ is essentially standard for such a framework; the more novel aspect is our account of $K\varphi$, and its interaction with $\diamond p$. The guiding idea is that knowledge ascription reflects the \textit{stability} of knowledge under \textit{available refinements} of veridical information. A notion of inter-subjective `available information' sets the bound on available refinements. A variation of a classic example (cf. \cite[pg. 148]{Hacking1967}) provides initial motivation (cf. the Schmolmes case in \cite[sect.1]{Fintel2011}):

\begin{quote}
\textbf{Salvaging Operation.} Imagine a salvage crew searching for a ship that sank a long time ago. The mate of the salvage ship works from an old log, but overlooks some pertinent entries in the log, and concludes that the wreck may be in a certain bay. He confidently says `the hulk might be in these waters'. But, as it turns out later, careful examination of the log shows that the boat must have gone down at least thirty miles further south.
\end{quote}

One hesitates to say `the mate \textit{knew} that the ship might be in the bay' (better to say `he merely believed it might be'), given that his rational acceptance of `it might be in the bay' did not survive the incorporation of readily available information.

Our semantics may thus be taken (i) as an abstract version of the \textit{defeasibility theory of knowledge} (cf. \cite{Lehrer1969}, \cite{Baltag2022}) and (ii) as a novel implementation of the insight from \cite{Hacking1967} that the \textit{available information} bears on whether a speaker is entitled to an epistemic possibility claim, going beyond the actual knowledge of the speaker or hearers.

\section{\textit{Domain Semantics}}\label{section-rivals}

Domain semantics invites a natural account of knowledge ascription that exhibits \textbf{NTrans}. This contrasts with the influential \textit{descriptivist/factualist} school on epistemic modals, according to which `it might be that $p$' is taken as synonymous with, roughly, `$p$ is not ruled out by what is mutually known, or easily known, by a relevant group of agents'. Negative transparency seems untenable on the descriptivist account: that Smith knows that the train isn't late does not entail that Smith knows anything about what the mutual knowledge of a certain group rules out (even if the group includes only Smith: she might well be uncertain what she knows).  

An \textit{information model} $\mathcal{I} = \langle W, \mathtt{I} \rangle$ is a pair, with $W$ the set of all possible worlds and $\mathtt{I}$ an assignment of an information state $\mathtt{I}(p)$ to each atomic sentence of $\mathcal{L}$. We take an information state -- generically denoted $\textbf{i}$ -- to just be an \textit{intension}, i.e., a subset of $W$. State $\textbf{i}$ is \textit{veridical at $w$} when $w \in \textbf{i}$. We evaluate sentences in $\mathcal{L}$ as true (1) or false (0) relative to a possible world $w$ and an information state $\textbf{i}$: the valuation function $[ \cdot ]^{w, \textbf{i}}$ is as follows.

\begin{definition}[\textbf{Domain Semantics}] Given an information model $\mathcal{I}$:

\begin{tabular}{ l c l }
	$[ p ]^{w, \textbf{i}} = 1$ & iff & $w \in \mathtt{I}(p)$\\
	$[ \neg \varphi ]^{w, \textbf{i}} = 1$ & iff & $[ \varphi ]^{w, \textbf{i}} = 0$\\
	$[ \varphi \wedge \psi ]^{w, \textbf{i}} = 1$ & iff & $[ \varphi ]^{w, \textbf{i}} = 1$ and $[ \psi ]^{w, \textbf{i}} = 1$\\
	$[ \diamond \varphi ]^{w, \textbf{i}} = 1$ & iff & $\exists u \in \textbf{i}$: $[ \varphi ]^{u, \textbf{i}} = 1$\\
\end{tabular}
\end{definition}

The following notion (following \cite{Yalcin2007}) will be important for our account of attitude ascriptions:

\begin{definition}[\textbf{Acceptance}]
$\textbf{i} \Vdash \varphi$ iff $\forall w \in \textbf{i}$: $[ \varphi ]^{w, \textbf{i}} = 1$
\end{definition}

If $\textbf{i} \Vdash \varphi$, we say information $\textbf{i}$ \textit{accepts} or \textit{supports} sentence $\varphi$, modeling the idea that having exactly the information $\textbf{i}$ is sufficient for establishing $\varphi$, rendering $\varphi$ correctly assertable (putting aside Gricean considerations, anyway). To get a feel for $\Vdash$, note that the following sensible properties are readily verified (though note that, given domain semantics, they do not generalize; cf. \S \ref{section-bluntness-redux}, \cite{Hawke2021}):

\medskip

\begin{tabular}{ l c l } 
$\textbf{i} \Vdash p$ & iff & $\forall w \in \textbf{i}$: $w \in \mathtt{I}(p)$\\
$\textbf{i} \Vdash \neg p$ & iff & $\forall w \in \textbf{i}$: $w \notin \mathtt{I}(p)$\\
$\textbf{i} \Vdash p \wedge q$ & iff & $\textbf{i} \Vdash p$ and $\textbf{i} \Vdash q$\\
$\textbf{i} \Vdash p \vee q$ & iff & $\exists \textbf{i}_1, \textbf{i}_2$ s.t. $\textbf{i} = \textbf{i}_1 \cup \textbf{i}_2$ and $\textbf{i}_1 \Vdash p$ and $\textbf{i}_2 \Vdash q$\\
$\textbf{i} \Vdash \diamond p$ & iff & $\exists w \in \textbf{i}$: $\{w\} \Vdash p$\\
$\textbf{i} \Vdash \neg \diamond p$ & iff & $\forall w \in \textbf{i}$: $\{w\} \Vdash \neg p$\\
\end{tabular}

\medskip

As for logical consequence, two notions of entailment are prominent in this framework. First, a truth-preservation relation $\vDash$ is straightforwardly defined: $\varphi \vDash \psi$ holds exactly when $[ \varphi ]^{w, \textbf{i}} = 1$ implies $[ \psi ]^{w, \textbf{i}} = 1$ for every $w$ and $\textbf{i}$ in every model $\mathcal{I}$. Second, an acceptance-preservation relation $\Vdash$ is straightforwardly defined: $\varphi \Vdash \psi$ holds exactly when $\textbf{i} \Vdash \varphi$ implies $\textbf{i} \Vdash \psi$ for every $\textbf{i}$ in every model $\mathcal{I}$. Both consequence relations serve as useful tools for explaining ordinary intuitions about entailment and contradiction. For example, the domain semanticist utilizes $\Vdash$, not $\vDash$, to explain the incoherence of epistemic contradictions of the form $p \wedge \diamond \neg p$: while $p \wedge \diamond \neg p$ is consistent with respect to $\vDash$, there is no $\textbf{i}$ such that $\textbf{i} \Vdash p \wedge \lozenge \neg p$.

To introduce attitude ascriptions, we transfer an account of belief ascription from \cite{Yalcin2011} to knowledge ascription. Call this the \textit{classical approach}. A \textit{classical model} $\mathcal{C}$ supplements an information model with function $\textbf{k}$, mapping a world to a non-empty intension $\textbf{k}^w$. The idea is that $\textbf{k}^w$ models Smith's epistemic state at $w$ as a set of \textit{epistemic alternatives} (the total informational content of Smith's knowledge). As an agent's knowledge can never rule out the actual world, we stipulate:

\begin{itemize}
\item[C1.] $\forall w \in W$: $w \in \textbf{k}^w$
\end{itemize}

\begin{definition}[\textbf{Classicism}] Given classical $\mathcal{C}$, we extend domain semantics with:

\begin{tabular}{ l c l }
	$[ K\varphi ]^{w, \textbf{i}} = 1$ & iff & $\textbf{k}^w \Vdash \varphi$\\
\end{tabular}
\end{definition}

However, relative to the strategy of \S \ref{section-strategy}, classicism is only a partial success.

\begin{fact}
For classicists, \textbf{NTrans} holds.
\end{fact}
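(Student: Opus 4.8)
The plan is to reduce the biconditional to a single acceptance-level observation: that $\textbf{i} \Vdash \neg \diamond p$ and $\textbf{i} \Vdash \neg p$ coincide for \emph{every} information state $\textbf{i}$. Since the Classicism clause evaluates $K\varphi$ at a point $(w, \textbf{i})$ purely by whether $\textbf{k}^w$ accepts $\varphi$ (with no genuine dependence on the evaluation state $\textbf{i}$), this acceptance-level coincidence will transfer directly to the truth values of $K\neg \diamond p$ and $K\neg p$ at every point, which is what \textbf{NTrans} demands.

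First I would establish the acceptance coincidence using only the properties already tabulated for $\Vdash$. The table gives $\textbf{i} \Vdash \neg p$ iff $\forall w \in \textbf{i}\colon w \notin \mathtt{I}(p)$, and $\textbf{i} \Vdash \neg \diamond p$ iff $\forall w \in \textbf{i}\colon \{w\} \Vdash \neg p$. Unpacking the singleton condition via the definition of acceptance together with the clauses for negation and atoms shows that $\{w\} \Vdash \neg p$ holds exactly when $[ p ]^{w, \{w\}} = 0$, i.e. exactly when $w \notin \mathtt{I}(p)$. Hence both $\textbf{i} \Vdash \neg p$ and $\textbf{i} \Vdash \neg \diamond p$ reduce to the \emph{identical} condition $\forall w \in \textbf{i}\colon w \notin \mathtt{I}(p)$, so the two are equivalent for arbitrary $\textbf{i}$.

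Second, I would instantiate this with $\textbf{i} = \textbf{k}^w$. By the Classicism clause, for every world $w$ and every state $\textbf{i}$ in every classical model $\mathcal{C}$ we have $[ K\neg \diamond p ]^{w, \textbf{i}} = 1$ iff $\textbf{k}^w \Vdash \neg \diamond p$ iff $\textbf{k}^w \Vdash \neg p$ iff $[ K\neg p ]^{w, \textbf{i}} = 1$. Thus $K\neg \diamond p$ and $K\neg p$ have identical truth values pointwise, which is stronger than either notion of consequence and secures $K\neg \diamond p \equiv K\neg p$ under both $\vDash$ and $\Vdash$.

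I anticipate no serious obstacle here: the whole content resides in the tabulated fact that the `might' modifier is idle under acceptance of a negation, i.e. that $\neg \diamond p$ and $\neg p$ are acceptance-equivalent. The one point demanding care is confirming the singleton reduction $\{w\} \Vdash \neg p \iff w \notin \mathtt{I}(p)$, since this is precisely where the world-relativity of the $\diamond$ clause collapses; once that is granted, the remainder is a one-line substitution, and notably the veridicality constraint \textbf{C1} plays no role in the argument.
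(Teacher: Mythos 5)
Your proof is correct and follows essentially the same route as the paper's: both arguments hinge on the singleton reduction $\{u\} \Vdash \neg p$ iff $u \notin \mathtt{I}(p)$, which makes $\textbf{k}^w \Vdash \neg \diamond p$ and $\textbf{k}^w \Vdash \neg p$ collapse to the same condition, and then apply the Classicism clause on both sides. Your only difference is organizational (proving the acceptance-level equivalence for arbitrary $\textbf{i}$ before instantiating at $\textbf{k}^w$, and remarking explicitly that C1 is unused), which does not change the substance of the argument.
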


\begin{proof} $[ K\neg \diamond p ]^{w, \textbf{i}}$ iff $\textbf{k}^w \Vdash \neg \diamond p$ iff $\forall u \in \textbf{k}^w$: $\{u\} \Vdash \neg p$ iff $\forall u \in \textbf{k}^w$: $u \notin \mathtt{I}(p)$ iff $\textbf{k}^w \Vdash \neg p$ iff $[ K \neg p ]^{w, \textbf{i}}$ \end{proof}

\begin{fact}
For classicists, \textbf{Ver} fails.
\end{fact}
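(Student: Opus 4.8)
The plan is to refute \textbf{Ver} by counterexample, exhibiting a classical model $\mathcal{C}$, a world $w$, a state $\textbf{i}$, and a formula $\varphi$ with $[K\varphi]^{w,\textbf{i}} = 1$ yet $[\varphi]^{w,\textbf{i}} = 0$; since this blocks truth-preservation at the point $(w,\textbf{i})$, and the very same witness also blocks acceptance-preservation, it defeats \textbf{Ver} on either reading of $\vdash$. The formula to target is $\varphi = \diamond p$. The structural reason is a mismatch built into classicism: $[K\diamond p]^{w,\textbf{i}}$ is settled by whether $\textbf{k}^w$ accepts $\diamond p$, i.e.\ by whether $\textbf{k}^w$ contains a $p$-world, whereas $[\diamond p]^{w,\textbf{i}}$ is settled by whether the evaluation state $\textbf{i}$ contains a $p$-world. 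Decoupling $\textbf{k}^w$ from $\textbf{i}$ --- letting the agent's epistemic state leave $p$ open while the background state rules it out --- will drive the two apart.

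Concretely, I would take $W = \{w, u\}$ with $w \neq u$, set $\mathtt{I}(p) = \{u\}$, put $\textbf{k}^w = \{w, u\}$ (which respects \textbf{C1}, since $w \in \textbf{k}^w$), and evaluate against the distinct state $\textbf{i} = \{w\}$. Then $\textbf{k}^w \Vdash \diamond p$ holds, because $u \in \textbf{k}^w$ and $\{u\} \Vdash p$, so $[K\diamond p]^{w,\textbf{i}} = 1$. But the only world in $\textbf{i} = \{w\}$ is not a $p$-world, so there is no $v \in \textbf{i}$ with $[p]^{v,\textbf{i}} = 1$, giving $[\diamond p]^{w,\textbf{i}} = 0$. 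Hence $K\diamond p \not\vDash \diamond p$, and the same model witnesses $\{w\} \Vdash K\diamond p$ while $\{w\} \not\Vdash \diamond p$.

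I expect the only delicate points to be bookkeeping rather than genuine obstacles: verifying that \textbf{C1} is met (so the model is admissible) and confirming that the failure is essentially modal. The latter is worth flagging because it diagnoses the tension the paper is tracking: for modal-free $\varphi$, classicism does secure factivity (from $w \in \textbf{k}^w$ together with the $\textbf{i}$-independence of modal-free truth), so any counterexample \emph{must} exploit an embedded $\diamond$. This is exactly the phenomenon of \ref{example3} --- the agent `mistakenly knows' a might-claim that refined information defeats --- and it is the flip side of the earlier proof that classicism validates \textbf{NTrans}: the clause for $K$ reaches into $\textbf{k}^w$, while the embedded modal is silently reinterpreted against $\textbf{i}$, so the two cannot in general be kept in agreement.
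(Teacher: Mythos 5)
Your proposal is correct and takes essentially the same approach as the paper: the paper's counter-model is the same one you construct (with worlds renamed $w_1, w_2$ for your $w, u$), targeting the same formula $\diamond p$ and exploiting the same mismatch between $\textbf{k}^w$ and the evaluation state $\textbf{i}$. Your extra observations --- that the same witness defeats acceptance-preservation, and that any counterexample must involve an embedded $\diamond$ --- are sound and consistent with remarks the paper makes later, but are not needed for the Fact itself.
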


\begin{proof} Counter-model: consider $\mathcal{C}$ where (i) $W = \{w_1, w_2\}$, (ii) $\mathtt{I}(p) = \{w_2\}$, (iii) $\textbf{k}^{w_1} = W$. Let $\textbf{i} = \{w_1\}$. So, by (ii) and (iii), $[ K\diamond p ]^{w_1, \textbf{i}} = 1$, as there is a $p$-world in $\textbf{k}^{w_1}$. But $[ \diamond p ]^{w_1, \textbf{i}}=0$, as there is no $p$-world in $\textbf{i}$.\end{proof}

Of course, a small modification to the semantics secures \textbf{Ver}: 

\medskip

\begin{tabular}{ l c l }
$[ K\varphi ]^{w, \textbf{i}} = 1$ & iff: & $\textbf{k}^w \Vdash \varphi$ and $[ \varphi ]^{w, \textbf{i}} = 1$.
\end{tabular} 

\medskip

However, the modified proposal abandons \textbf{NTrans}. For a counter-model, take $\mathcal{C}$ where, for some $@ \in W$, every world in $\textbf{k}^@$ (including $@$ itself) is a $\neg p$-world (assuring $\textbf{k}^@ \Vdash \neg p \wedge \neg \diamond p$ and $[ \neg p ]^{@, \textbf{i}} = 1$), but there is a $p$-world in $\textbf{i}$ (so $[ \neg \diamond p ]^{@, \textbf{i}} = 0$). So, given $\mathcal{C}$, $[ K\neg p ]^{@, \textbf{i}} = 1$ and $[ K\neg \diamond p ]^{@, \textbf{i}} = 0$.

However, it is readily checked that the modified proposal yields: $\textbf{i} \Vdash K\neg \diamond p$ iff $\textbf{i} \Vdash K\neg p$. So, \textbf{NTrans} emerges at the level of acceptance, in tandem with \textbf{Ver}. Nevertheless, two problems remain. First, the modified proposal is, as it stands, markedly \textit{ad hoc}: adding the clause $[ \varphi ]^{w, \textbf{i}} = 1$ to the truth condition for $K\varphi$ raises interpretive questions about the nature of $\textbf{k}^w$ and serves \textit{purely} to assure factivity in the case of modalized formulas (it is readily checked that \textbf{Ver} holds for $\lozenge$-free formulas in the original account of $K\varphi$). Second, even more pointedly, the modified proposal does not yield \textbf{GeNT}: in particular, there exists $\mathcal{C}$ and $\textbf{i}$ where $\textbf{i} \Vdash K\neg (p \wedge q)$ but $\textbf{i} \nVdash K\neg (p \wedge \diamond q)$. To see this, let $\textbf{i}$ contain only worlds $w_1$ and $w_2$, with $p$ only true at $w_1$, and $q$ only true at $w_2$. Thus, $\textbf{i} \Vdash \neg (p \wedge q)$ but $\textbf{i} \nVdash \neg (p \wedge \diamond q)$ (as $[p \wedge \diamond q]^{w_1, \textbf{i}} = 1$). If we further set $\textbf{k}^w$ to be $\textbf{i}$ for every $w \in \textbf{i}$, we get: $\textbf{i} \Vdash K \neg (p \wedge q)$ but $\textbf{i} \nVdash K \neg (p \wedge \diamond q)$.

\section{\textit{Stable Acceptance Semantics}}\label{section-bluntness-redux}

We now present an information-sensitive semantic theory that achieves the goal of \S \ref{section-strategy}. The leading idea behind this theory is that Smith's knowledge at $w$ is \textit{stable} under refinement of her veridical information at $w$ - or at least refinements that are `available' at $w$, in a sense to be clarified. 

Our system may be seen as a novel implementation of a well-known (alleged) insight that the truth/aptness of an epistemic possibility claim is sensitive to \textit{objective factors} that go beyond the actual knowledge of the speaker or other relevant agents: in particular, it is sensitive to information that has not been acquired but is (in some sense) \textit{available} to the relevant agents. Consider two cases from \cite{Hacking1967}.

\begin{quote}
Imagine a salvage crew searching for a ship that sank a long time ago. The mate of the salvage ship works from an old log, makes a mistake in his calculations, and concludes that the wreck may be in a certain bay. It is possible, he says, that the hulk is in these waters. No one knows anything to the contrary. But in fact, as it turns out later, it simply was not possible for the vessel to be in that bay; more careful examination of the log shows that the boat must have gone down at least thirty miles further south. The mate said something false when he said, ``It is possible that we shall find the treasure here'', but the falsehood did not arise from what anyone actually knew at the time. \cite[pg. 148]{Hacking1967}
\end{quote}

As for the second case:

\begin{quote}
Consider a person who buys a lottery ticket. At the time he buys his ticket we shall say it is possible he will win, though probably he will not. As expected, he loses. But retrospectively it would be absurd to report that it only seemed possible that the man would win. It was perfectly possible that he would win. To see this clearly, consider a slightly different case, in which the lottery is not above board; it is rigged so that only the proprietors can win. Thus, however it may have seemed to the gullible customer, it really was not possible that he would win. It only seemed so. ``Seemed possible'' and ``was possible'' both have work cut out for them. \cite[pg. 148]{Hacking1967}
\end{quote}

This suggests a proposal along the following lines: that whether an epistemic possibility claim is aptly assertible depends, in context, not only on the information that is already possessed, but that is available via ``practicable investigation'' (as Hacking puts it), or depends (as \cite{DeRose1991} puts it) on the ``relevant way[s] by which members of the relevant community can come to know'', or tracks (as \cite[pg. 402]{Moore1962} puts it) a distinction between what the speaker or other relevant agents ``easily might know'' versus ``couldn't easily know or have known''. We needn't commit to any particular elaboration here (cf.\cite[sect.1]{Fintel2011}).

Exactly what to make of the above cases is debatable, as \cite[Sects. 10.2.2, 10.4.2]{MacFarlane2014} points out. For our purposes, we need only observe the following. First, one hesitates to say that the mate \textit{knew} that they might find the treasure in the bay: as his claim could not be maintained were accessible further evidence collected, it does not rise to knowledge. Second, it seems reasonable to say that we \textit{knew}, at the time, that the person with the fair lottery ticket might win (but probably would not). Our beliefs seemed sufficiently sensitive to the available information: given the intrinsic limits on predicting a lottery, the possibility of his winning could not be ruled out even with all accessible evidence on the table.

Two strategies are available to theorists for explaining these observations. First, one could incorporate objective factors as a constraint on \textit{epistemic possibility claims}. As \cite[Sect. 10.2.2]{MacFarlane2014} notes, this has the cost that it becomes hard to see how the casual `might' claims we make in ordinary life are ever warranted. Alternatively, one could incorporate objective factors as a constraint on \textit{knowledge ascriptions} (with an eye to delivering plausible interactions with epistemic modals). As the conditions for asserting a knowledge claim are plausibly relatively demanding, the analogue of the previous objection has less force in this case. Our own theory exploits this second approach, citing the precedent and independent motivation provided by the tradition of \textit{defeasibility} theories of knowledge, in the spirit of \cite{Lehrer1969} (we leave more detailed comparisons for elsewhere).

In contrast to domain semantics, we offer a bilateral \textit{acceptance semantics}: instead of evaluating sentences at world-information pairs and deriving acceptance conditions, sentences are evaluated at just an information state. Hence, acceptance conditions (and, simultaneously, rejection conditions) are \textit{directly} provided. For some independent advantages of working with an acceptance semantics, see \cite{Veltman1985}, \cite{Schroeder2008a}, \cite{Ciardelli2021} and \cite{Aloni2022}; for independent drawbacks to domain semantics, see \cite{Hawke2021}.

A \textit{bounded model} $\mathcal{M}$ supplements an information model with functions $\textbf{k}$ and $\textbf{i}$, each mapping a world to an information state (a non-empty intension), respectively denoted $\textbf{k}^w$ and $\textbf{i}^w$. We call $\textbf{i}^w$ the \textit{worldly information at $w$}, while $\textbf{k}^w$ again models the set of \textit{epistemic alternatives}: the possible worlds compatible with the agent's total knowledge state (for simplicity we proceed with a single agent, writing $K$ instead of $K_1$). We say that intension $\textbf{j}$ \textit{refines} intension $\textbf{i}$ when $\textbf{j} \subseteq \textbf{i}$. We say that $\textbf{i}$ is \emph{internally coherent} when $\textbf{i}$ is non-empty and, for every $w \in \textbf{i}$, $\textbf{i}^w$ refines $\textbf{i}$. Intuitively, an internally coherent information state $\textbf{i}$ is coherent in the following sense: if $\textbf{i}$ leaves it open that the best available information (the `worldly information') cannot rule out a certain possibility, then $\textbf{i}$ does not itself rule out that possibility. We say that $\textbf{i}$ is \textit{accessible} at $w$ exactly when $\textbf{i}$ is both internally coherent and veridical at $w$, i.e., $w \in \textbf{i}$. We stipulate, for all $w \in W$, that $\textbf{k}^w$ and $\textbf{i}^w$ are both accessible at $w$.

\begin{lemma}\label{lemma-SAS-1}
If $\textbf{i}$ is internally coherent then $\textbf{i} = \bigcup_{w \in \textbf{i}} \textbf{i}^w$.
\end{lemma}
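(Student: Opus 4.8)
The plan is to prove the set identity by double inclusion, establishing $\textbf{i} \subseteq \bigcup_{w \in \textbf{i}} \textbf{i}^w$ and $\bigcup_{w \in \textbf{i}} \textbf{i}^w \subseteq \textbf{i}$ separately. The two inclusions draw on different hypotheses, and identifying which hypothesis powers which direction is the only point requiring genuine care.

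For the inclusion $\textbf{i} \subseteq \bigcup_{w \in \textbf{i}} \textbf{i}^w$, I would fix an arbitrary $w \in \textbf{i}$ and aim to exhibit a set appearing in the union that contains $w$. The natural candidate is $\textbf{i}^w$ itself. Here I invoke the global stipulation that $\textbf{i}^w$ is accessible at $w$, hence in particular veridical at $w$, which by definition yields $w \in \textbf{i}^w$. Since $w \in \textbf{i}$, the set $\textbf{i}^w$ is among those indexed in the union, so $w \in \bigcup_{v \in \textbf{i}} \textbf{i}^v$. Notably, this direction does not use internal coherence at all; it rests solely on the veridicality of worldly information.

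For the reverse inclusion $\bigcup_{w \in \textbf{i}} \textbf{i}^w \subseteq \textbf{i}$, I would take an arbitrary member $u$ of the union, so that $u \in \textbf{i}^v$ for some $v \in \textbf{i}$. This is precisely where internal coherence enters: since $\textbf{i}$ is internally coherent and $v \in \textbf{i}$, the worldly information $\textbf{i}^v$ refines $\textbf{i}$, i.e., $\textbf{i}^v \subseteq \textbf{i}$. Hence $u \in \textbf{i}^v \subseteq \textbf{i}$, which completes this inclusion and thus the identity.

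I do not anticipate a real obstacle, since the statement is a routine unfolding of the definitions of veridicality and refinement. The one subtlety worth flagging is bookkeeping about hypotheses: the forward inclusion depends silently on the stipulation that each $\textbf{i}^w$ is accessible (and so veridical) at $w$ — without it, a world could belong to $\textbf{i}$ yet lie outside every $\textbf{i}^v$, breaking $\subseteq$ — whereas the reverse inclusion is, in effect, just a restatement of internal coherence.
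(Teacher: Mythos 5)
Your proof is correct and follows essentially the same route as the paper's: both establish the identity by double inclusion, using the stipulation that $\textbf{i}^w$ is accessible (hence veridical) at $w$ for the inclusion $\textbf{i} \subseteq \bigcup_{w \in \textbf{i}} \textbf{i}^w$, and internal coherence (each $\textbf{i}^w$ refines $\textbf{i}$) for the reverse inclusion. Your added bookkeeping about which hypothesis drives which direction matches the paper's proof exactly.
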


\begin{proof} As $\textbf{i}^w$ refines $\textbf{i}$ for all $w \in \textbf{i}$, we have $\bigcup_{w \in \textbf{i}} \textbf{i}^w \subseteq \textbf{i}$. Suppose that $w \in \textbf{i}$. As $\textbf{i}^w$ is accessible at $w$, $w \in \textbf{i}^w$. So, $\textbf{i} \subseteq \bigcup_{w \in \textbf{i}} \textbf{i}^w $. \end{proof}

\begin{definition}[\textbf{Accessible Refinement}]
Given information state $\textbf{i}$, let $Acc(\textbf{i})$ be the set of information states $\textbf{j}$ where (i) $\textbf{j}$ refines $\textbf{i}$ and (ii) $\textbf{j}$ is accessible at $w$ for some $w \in \textbf{i}$. We call the members of $Acc(\textbf{i})$ the \textit{accessible refinements of $\textbf{i}$}. \end{definition}

Note that every $\textbf{j} \in Acc(\textbf{i})$ has the property: there exists $w \in \textbf{i}$ such that $\textbf{i}^w \subseteq  \textbf{j} \subseteq \textbf{i}$. Thus, the accessible refinements of $\textbf{i}$ are bounded by the candidates left open by $\textbf{i}$ for what the worldly information might be.

\begin{definition}[\textbf{Stable Acceptance Semantics}]
Given bounded $\mathcal{M}$, intension $\textbf{i}$:

\begin{tabular}{l c l}
$\textbf{i} \Vdash p$ & iff & $\forall w \in \textbf{i}$: $w \in \mathtt{I}(p)$\\
$\textbf{i} \dashV p$ & iff & $\forall w \in \textbf{i}$: $w \notin \mathtt{I}(p)$\\
$\textbf{i} \Vdash \neg \varphi$ & iff & $\textbf{i} \dashV \varphi$\\
$\textbf{i} \dashV \neg \varphi$ & iff & $\textbf{i} \Vdash \varphi$\\
$\textbf{i} \Vdash \varphi \wedge \psi$ & iff & $\textbf{i} \Vdash \varphi$ and $\textbf{i} \Vdash \psi$\\
$\textbf{i} \dashV \varphi \wedge \psi$ & iff & $\exists \textbf{i}_1, \textbf{i}_2$ s.t. $\textbf{i} = \textbf{i}_1 \cup \textbf{i}_2$ and $\textbf{i}_1 \dashV \varphi$ and $\textbf{i}_2 \dashV \psi$\\
$\textbf{i} \Vdash \diamond \varphi$ & iff & $\exists w \in \textbf{i}$: $\{w\} \Vdash \varphi$\\
$\textbf{i} \dashV \diamond \varphi$ & iff & $\forall w \in \textbf{i}$: $\{w\} \dashV \varphi$\\
%$\textbf{i} \Vdash \diamond \varphi$ & iff & $\textbf{i} \ndashV \varphi$\\
%$\textbf{i} \dashV \diamond \varphi$ & iff & $\textbf{i} \dashV \varphi$\\
$\textbf{i} \Vdash K\varphi$ & iff & $\forall w \in \textbf{i}$, $\forall \textbf{j} \in Acc(\textbf{k}^w)$: $\textbf{\textbf{j}} \Vdash \varphi$\\
$\textbf{i} \dashV K\varphi$ & iff & $\forall w \in \textbf{i}$, $\exists \textbf{j} \in Acc(\textbf{k}^w)$: $\textbf{j} \nVdash \varphi$\\
\end{tabular}
\end{definition}

\medskip

Read $\textbf{i} \Vdash \varphi$ as `$\textbf{i}$ accepts $\varphi$' or `$\textbf{i}$ supports $\varphi$', and $\textbf{i} \dashV \varphi$ as `$\textbf{i}$ rejects $\varphi$' or `$\textbf{i}$ refutes $\varphi$'. The most unusual entry (cf. \cite{Veltman1985}, \cite{Hawke2021}, \cite{Aloni2022} and \S \ref{section-rivals}) is that for $K\varphi$: according to our semantics, `Smith knows that $\varphi$' can be accepted exactly when it is established that every accessible refinement of Smith's knowledge state supports $\varphi$; `Smith knows that $\varphi$' can be rejected exactly when it is established that an accessible refinement of Smith's knowledge state doesn't support $\varphi$.

A couple of technical lemmas will prove useful.

\begin{lemma}\label{tech-lemma1}
If $\textbf{i} \Vdash \varphi$ and $\textbf{j} \Vdash \varphi$ then $\textbf{i} \cup \textbf{j} \Vdash \varphi$. Likewise, if $\textbf{i} \dashV \varphi$ and $\textbf{j} \dashV \varphi$ then $\textbf{i} \cup \textbf{j} \dashV \varphi$.
\end{lemma}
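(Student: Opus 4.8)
The plan is to prove both halves of the lemma \emph{simultaneously} by induction on the complexity of $\varphi$. The simultaneous treatment is forced by the clauses for negation: since $\textbf{i} \Vdash \neg \psi$ is defined as $\textbf{i} \dashV \psi$ (and dually $\textbf{i} \dashV \neg\psi$ as $\textbf{i} \Vdash \psi$), union-closure of acceptance for $\neg \psi$ rests on union-closure of \emph{rejection} for $\psi$, and conversely. So I would carry the acceptance claim and the rejection claim together through the induction, each serving as the other's hypothesis at the negation step. Any defined connective such as $\vee$ is then subsumed via its De Morgan definition, so I only need to treat the primitives $p$, $\neg$, $\wedge$, $\diamond$, and $K$.

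Several cases are immediate and require no inductive hypothesis, because their conditions quantify directly over the worlds of the state. For the base case $\varphi = p$, both $\textbf{i} \Vdash p$ and $\textbf{i} \dashV p$ are universal conditions on worlds ($w \in \mathtt{I}(p)$, respectively $w \notin \mathtt{I}(p)$), so what holds of every world of $\textbf{i}$ and of every world of $\textbf{j}$ holds of every world of $\textbf{i} \cup \textbf{j}$. The same observation disposes of $\diamond\psi$ and $K\psi$: the conditions $\textbf{i} \dashV \diamond\psi$, $\textbf{i} \Vdash K\psi$, and $\textbf{i} \dashV K\psi$ are each universal over $w \in \textbf{i}$ and so transfer to the union at once, while $\textbf{i} \Vdash \diamond\psi$ is \emph{existential} over $w \in \textbf{i}$ and is therefore preserved under passage to the superset $\textbf{i} \cup \textbf{j}$ (the witnessing world and its singleton evaluation are untouched). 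The negation case $\varphi = \neg\psi$ is then a one-line appeal to the dual hypothesis, and acceptance of a conjunction is routine: $\textbf{i} \Vdash \psi\wedge\chi$ and $\textbf{j} \Vdash \psi\wedge\chi$ give acceptance of each conjunct by both states, so the hypothesis yields $\textbf{i}\cup\textbf{j} \Vdash \psi$ and $\textbf{i}\cup\textbf{j} \Vdash \chi$, hence $\textbf{i}\cup\textbf{j} \Vdash \psi\wedge\chi$.

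The one step requiring genuine care — and the step I expect to be the main obstacle — is \emph{rejection} of a conjunction, since its clause is the only one with the ``there exists a decomposition'' shape. Here $\textbf{i} \dashV \psi\wedge\chi$ supplies a split $\textbf{i} = \textbf{i}_1 \cup \textbf{i}_2$ with $\textbf{i}_1 \dashV \psi$ and $\textbf{i}_2 \dashV \chi$, and $\textbf{j} \dashV \psi\wedge\chi$ supplies a split $\textbf{j} = \textbf{j}_1 \cup \textbf{j}_2$ with $\textbf{j}_1 \dashV \psi$ and $\textbf{j}_2 \dashV \chi$. The idea is to regroup these into a single witnessing split of the union, $\textbf{i} \cup \textbf{j} = (\textbf{i}_1 \cup \textbf{j}_1) \cup (\textbf{i}_2 \cup \textbf{j}_2)$, which holds by commutativity and associativity of $\cup$. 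Applying the rejection hypothesis to the paired halves gives $\textbf{i}_1 \cup \textbf{j}_1 \dashV \psi$ and $\textbf{i}_2 \cup \textbf{j}_2 \dashV \chi$, and these two facts together with the regrouped identity exhibit exactly the decomposition demanded by the $\wedge$-rejection clause, yielding $\textbf{i} \cup \textbf{j} \dashV \psi\wedge\chi$. The subtlety to watch is purely bookkeeping: confirming the regrouped sets still union to $\textbf{i} \cup \textbf{j}$, and pairing the hypothesis correctly (the two $\psi$-rejecting halves together, the two $\chi$-rejecting halves together) rather than crossing them.
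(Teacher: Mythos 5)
Your proposal is correct and takes essentially the same route as the paper: a simultaneous induction on $\varphi$ carrying the acceptance and rejection claims together (forced by the negation clauses), with every case immediate except rejection of $\varphi \wedge \psi$, which is handled by regrouping the two witnessing decompositions as $(\textbf{i}_1 \cup \textbf{j}_1) \cup (\textbf{i}_2 \cup \textbf{j}_2)$. The only presentational difference is how the inductive hypothesis is made to apply to the decomposition pieces: the paper strengthens the inductive claim to range over all subsets of fixed $\textbf{i}$ and $\textbf{j}$, while you keep the claim universally quantified over arbitrary states, which accomplishes exactly the same thing.
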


\begin{proof} This can be established by a routine induction on $\varphi$, with respect to the following stronger property: (i) if $\textbf{a} \Vdash \varphi$ and $\textbf{b} \Vdash \varphi$ for all $\textbf{a} \subseteq \textbf{i}$ and $\textbf{b} \subseteq \textbf{j}$ then $\textbf{a} \cup \textbf{b} \Vdash \varphi$ for all $\textbf{a} \subseteq \textbf{i}$ and all $\textbf{b} \subseteq \textbf{j}$ and (ii) if $\textbf{a} \dashV \varphi$ and $\textbf{b} \dashV \varphi$ for all $\textbf{a} \subseteq \textbf{i}$ and all $\textbf{b} \subseteq \textbf{j}$ then $\textbf{a} \cup \textbf{b} \dashV \varphi$ for all $\textbf{a} \subseteq \textbf{i}$ and all $\textbf{b} \subseteq \textbf{j}$.  \end{proof}

\begin{lemma}\label{tech-lemma2}
If $\textbf{i}$ is internally coherent, the following are equivalent: 
\begin{itemize} 
\item[A.] $\forall \textbf{j} \in Acc(\textbf{i})$: $\textbf{\textbf{j}} \Vdash \varphi$
\item[B.] $\forall u \in \textbf{i}$: $\textbf{\textbf{i}}^u \Vdash \varphi$
\end{itemize}
\end{lemma}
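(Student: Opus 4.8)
The plan is to prove the two implications A $\Rightarrow$ B and B $\Rightarrow$ A separately, with the whole argument hinging on one structural observation: for each $u \in \textbf{i}$, the worldly-information state $\textbf{i}^u$ is itself an accessible refinement of $\textbf{i}$, i.e. $\textbf{i}^u \in Acc(\textbf{i})$. This holds because the model stipulation makes $\textbf{i}^u$ accessible at $u$, while internal coherence of $\textbf{i}$ gives $\textbf{i}^u \subseteq \textbf{i}$; thus $\textbf{i}^u$ refines $\textbf{i}$ and is accessible at the point $u \in \textbf{i}$, meeting both clauses in the definition of $Acc$.

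Given this observation, A $\Rightarrow$ B is immediate: if every member of $Acc(\textbf{i})$ supports $\varphi$, then in particular each $\textbf{i}^u$ (for $u \in \textbf{i}$) supports $\varphi$, which is exactly B.

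For B $\Rightarrow$ A I would take an arbitrary $\textbf{j} \in Acc(\textbf{i})$ and show $\textbf{j} \Vdash \varphi$. Since $\textbf{j}$ is accessible at some $w \in \textbf{i}$, it is internally coherent, so Lemma \ref{lemma-SAS-1} yields the decomposition $\textbf{j} = \bigcup_{v \in \textbf{j}} \textbf{i}^v$. Every $v \in \textbf{j}$ lies in $\textbf{i}$ (as $\textbf{j}$ refines $\textbf{i}$), so hypothesis B delivers $\textbf{i}^v \Vdash \varphi$ for each such $v$. It then remains to conclude that the union of these $\varphi$-supporting states supports $\varphi$.

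That last step is where the main obstacle lies: $\textbf{j}$, and hence the index set $\{v : v \in \textbf{j}\}$, may be infinite, whereas Lemma \ref{tech-lemma1} only supplies closure of $\Vdash$ under \emph{binary} unions. The fix is to upgrade Lemma \ref{tech-lemma1} to closure under arbitrary unions, which I expect to follow by essentially the same (bilateral) induction on $\varphi$: the universally quantified clauses (those for $p$, $\neg\psi$, $\diamond$-rejection, and both clauses for $K\psi$) and the existential $\diamond$-acceptance clause are transparently closed under arbitrary unions, and the only delicate case, rejection of a conjunction, is handled by collecting the witnessing decompositions $\textbf{i}_\alpha = \textbf{i}_\alpha^1 \cup \textbf{i}_\alpha^2$, setting $\textbf{a} = \bigcup_\alpha \textbf{i}_\alpha^1$ and $\textbf{b} = \bigcup_\alpha \textbf{i}_\alpha^2$, and invoking the inductive hypothesis on the two conjuncts so that $\bigcup_\alpha \textbf{i}_\alpha = \textbf{a} \cup \textbf{b}$ exhibits the required decomposition. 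With arbitrary-union closure in hand, $\textbf{j} = \bigcup_{v \in \textbf{j}} \textbf{i}^v \Vdash \varphi$, completing B $\Rightarrow$ A and hence the equivalence.
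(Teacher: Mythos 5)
Your proof is correct, and while it rests on the same two ingredients as the paper's own proof (the triviality of A $\Rightarrow$ B via $\textbf{i}^u \in Acc(\textbf{i})$, and, for B $\Rightarrow$ A, the decomposition $\textbf{j} = \bigcup_{v \in \textbf{j}} \textbf{i}^v$ supplied by Lemma \ref{lemma-SAS-1} together with union-closure of support), it organizes them differently. The paper proves B $\Rightarrow$ A by a single bilateral induction on the structure of $\varphi$, showing for each $\textbf{j} \in Acc(\textbf{i})$ that pointwise support (rejection) by the states $\textbf{i}^u$ transfers to $\textbf{j}$, citing the binary-union Lemma \ref{tech-lemma1} as crucial for the conjunction case. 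You instead place all the inductive work inside a strengthened form of Lemma \ref{tech-lemma1} --- closure of $\Vdash$ and $\dashV$ under \emph{arbitrary} unions --- after which B $\Rightarrow$ A follows with no further induction. Your version is more modular, and it is also more scrupulous on a genuine point: since $\textbf{j}$, and hence the family $\{\textbf{i}^v : v \in \textbf{j}\}$, may be infinite, binary (hence only finitary) union-closure does not suffice as stated, so the paper's ``routine induction'' implicitly needs exactly the arbitrary-union upgrade you spell out; your handling of its one delicate case (rejection of a conjunction, by aggregating the witnessing decompositions into $\textbf{a} = \bigcup_\alpha \textbf{i}_\alpha^1$ and $\textbf{b} = \bigcup_\alpha \textbf{i}_\alpha^2$) is right. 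Two harmless imprecisions in your sketch: the clauses for $\neg\psi$ are not ``universally quantified'' but reduce to the dual relation for the subformula, so they require the bilateral inductive hypothesis rather than being transparently union-closed; and the acceptance clause for conjunction, which you do not mention, is likewise immediate from the inductive hypothesis.
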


\begin{proof} As $\textbf{i}^u \in Acc(\textbf{i})$ for all $u \in \textbf{i}$, the direction from A to B is trivial. For the other direction, consider $\textbf{j} \in Acc(\textbf{i})$ and use a routine induction on the structure of $\varphi$ to show that if $\textbf{\textbf{i}}^u \Vdash \varphi$ holds for all $u \in \textbf{i}$, then $\textbf{\textbf{j}} \Vdash \varphi$ holds, and if $\textbf{\textbf{i}}^u \dashV \varphi$ holds for all $u \in \textbf{i}$, then $\textbf{\textbf{j}} \dashV \varphi$ holds, with Lemma \ref{lemma-SAS-1} and Lemma \ref{tech-lemma1} being put to crucial use (the latter for the case of $\varphi \wedge \psi$). \end{proof}

It follows that our entries for $K\varphi$ have the following convenient reformulation, which we deploy in coming proofs:

\medskip

\begin{tabular}{l c l}
$\textbf{i} \Vdash K\varphi$ & iff & $\forall w \in \textbf{i}$, $\forall u \in \textbf{k}^w$: $\textbf{\textbf{i}}^u \Vdash \varphi$\\
$\textbf{i} \dashV K\varphi$ & iff & $\forall w \in \textbf{i}$, $\exists u \in \textbf{k}^w$: $\textbf{i}^u \nVdash \varphi$\\
\end{tabular}

\medskip

Thus, according to our semantics, `Smith knows that $\varphi$' can be accepted exactly when it is established that Smith's knowledge state establishes that the worldly information establishes $\varphi$; `Smith knows that $\varphi$' can be rejected exactly when it is established that Smith's knowledge state leaves it open that the worldly information doesn't establish $\varphi$.

This system invites the following notion of logical consequence:

 \begin{definition}[\textbf{Coherent Consequence}]
 $\varphi \VDash \psi$ iff, for every bounded model $\mathcal{M}$, if $\textbf{i}$ is internally coherent and $\textbf{i} \Vdash \varphi$, then $\textbf{i} \Vdash \psi$.
 \end{definition}

\begin{definition}[\textbf{Assertoric Equivalence}]
Sentences $\varphi$ and $\psi$ are \emph{assertorically equivalent} if
$$ \textbf{i} \Vdash \varphi \textrm{ iff } \textbf{i} \Vdash \psi $$
for every information state $\textbf{i}$ in every bounded model $\mathcal{M}$.
\end{definition}

For example, $p \wedge q$ and $q \wedge p$ are assertorically equivalent.

\begin{definition}\label{definition-might-restricted}
A sentence $\varphi$ is \emph{$\diamond$-restricted} if the only occurrences of $\diamond$ are in the scope of a $K$ operator.
\end{definition}

For example, $\neg (p \wedge q)$ and $K\diamond p$ are $\diamond$-restricted; $\diamond p$ and $\neg \diamond (p \vee q)$ aren't.

To efficiently demonstrate the key properties of our system, we require some preliminary results, which are of independent technical interest.

\begin{lemma}\label{lemma-SAS2}
If $\varphi$ is $\diamond$-restricted then:
\begin{enumerate}
\item $\textbf{i} \Vdash \varphi$ iff $\forall w \in \textbf{i}$: $\{w\} \Vdash \varphi$
\item $\textbf{i} \dashV \varphi$ iff $\forall w \in \textbf{i}$: $\{w\} \dashV \varphi$
\end{enumerate}
\end{lemma}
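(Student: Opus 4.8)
The plan is to prove (1) and (2) \emph{simultaneously} by induction on the structure of $\varphi$, since the clauses for negation make acceptance and rejection mutually dependent. Because $\varphi$ is $\diamond$-restricted, its outermost connective is never a bare $\diamond$, so the only cases are $\varphi = p$, $\varphi = \neg\psi$, $\varphi = \psi\wedge\chi$ (with $\psi,\chi$ again $\diamond$-restricted), and $\varphi = K\psi$ (with $\psi$ arbitrary). The observation that trivializes the $K$ case is that its reformulated clauses (obtained above from Lemma~\ref{tech-lemma2}) attach to each $w\in\textbf{i}$ a condition --- $\forall u\in\textbf{k}^w\colon \textbf{i}^u \Vdash \psi$ for acceptance, $\exists u\in\textbf{k}^w\colon \textbf{i}^u \nVdash \psi$ for rejection --- that depends only on $w$ and the model, not on the rest of $\textbf{i}$. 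Hence $\textbf{i}\Vdash K\psi$ iff $\forall w\in\textbf{i}\colon \{w\}\Vdash K\psi$ (and dually for rejection) falls out simply by unfolding both sides, with no appeal to the induction hypothesis; this is where the restriction on $\diamond$ earns its keep, as it lets $\diamond$ hide harmlessly inside the $K$-scope.

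The atomic case is immediate, since $\{w\}\Vdash p$ iff $w\in\mathtt{I}(p)$ and $\{w\}\dashV p$ iff $w\notin\mathtt{I}(p)$. For $\varphi=\neg\psi$ I would use $\textbf{i}\Vdash\neg\psi$ iff $\textbf{i}\dashV\psi$ together with part (2) of the hypothesis for $\psi$, and symmetrically $\textbf{i}\dashV\neg\psi$ iff $\textbf{i}\Vdash\psi$ with part (1); this cross-dependence is exactly why both halves must be carried through the induction at once. For the \emph{acceptance} half of $\varphi=\psi\wedge\chi$ the argument is routine: $\textbf{i}\Vdash\psi\wedge\chi$ iff $\textbf{i}\Vdash\psi$ and $\textbf{i}\Vdash\chi$, and the two universal-over-singletons conditions furnished by the hypothesis combine pointwise.

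The one genuinely delicate case is \emph{rejection} of a conjunction, governed by the splitting clause $\textbf{i}\dashV\psi\wedge\chi$ iff $\textbf{i}=\textbf{i}_1\cup\textbf{i}_2$ with $\textbf{i}_1\dashV\psi$ and $\textbf{i}_2\dashV\chi$. For the left-to-right direction, given such a split and any $w\in\textbf{i}$, the world $w$ lies in $\textbf{i}_1$ or $\textbf{i}_2$; part (2) of the hypothesis applied to the relevant piece yields $\{w\}\dashV\psi$ or $\{w\}\dashV\chi$, and padding with the empty state (which rejects every formula vacuously) gives $\{w\}\dashV\psi\wedge\chi$. For the converse I would take the explicit decomposition $\textbf{i}=\textbf{A}\cup\textbf{B}$ with $\textbf{A}=\{w\in\textbf{i}\colon\{w\}\dashV\psi\}$ and $\textbf{B}=\{w\in\textbf{i}\colon\{w\}\dashV\chi\}$: the assumption $\{w\}\dashV\psi\wedge\chi$ for each $w$ guarantees, by inspecting which cell of $w$'s own split contains $w$, that $\textbf{A}\cup\textbf{B}=\textbf{i}$, while part (2) of the hypothesis certifies $\textbf{A}\dashV\psi$ and $\textbf{B}\dashV\chi$.

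I expect this conjunction-rejection case to be the main obstacle: one must verify that scattered pointwise rejections can be regrouped into a single two-cell cover of $\textbf{i}$, and this depends essentially on allowing empty cells in the splitting clause (equivalently, on the empty state rejecting every formula). Without that convention the equivalence actually fails --- a two-world state that rejects $p\wedge q$ by splitting $p$ and $q$ across its two worlds is a witness --- so I would make the role of the empty state explicit in writing up this step.
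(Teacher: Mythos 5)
Your proof is correct and takes the same route as the paper, which records this result only as ``a routine induction'': you have supplied that induction in full (simultaneous treatment of acceptance and rejection, the pointwise $K$-clause requiring no induction hypothesis, and the empty-cell splitting for rejection of conjunctions, which is indeed the one delicate step). One small repair: your parenthetical claim that the empty state rejects \emph{every} formula is false in general (e.g.\ $\emptyset \ndashV \neg\diamond p$, since that would require $\emptyset \Vdash \diamond p$, which fails), but it does hold for all $\diamond$-restricted formulas --- which is all your argument ever uses, and which follows vacuously from your own part (2) applied to $\textbf{i} = \emptyset$ --- so state the claim in that restricted form.
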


\begin{proof} A routine induction. \end{proof}

\begin{lemma}\label{lemma-SAS4}
If $\varphi$ is $\diamond$-restricted then: $\textbf{i} \dashV \diamond \varphi \;  \textrm{ iff } \; \textbf{i} \dashV \varphi$
\end{lemma}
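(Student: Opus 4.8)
The plan is to observe that both sides of the claimed biconditional collapse, via results already in hand, onto the very same pointwise condition, namely $\forall w \in \textbf{i}$: $\{w\} \dashV \varphi$. So the proof is a short chaining of two equivalences and needs no fresh induction of its own.

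First I would unfold the left-hand side directly from the rejection clause for $\diamond$ in the Stable Acceptance Semantics: by definition $\textbf{i} \dashV \diamond \varphi$ holds iff $\forall w \in \textbf{i}$: $\{w\} \dashV \varphi$. The key feature to flag is that this clause is stated for an arbitrary $\varphi$ and makes no appeal to $\varphi$ being $\diamond$-restricted; it simply encodes that rejection of a `might' claim is witnessed worldwise, at each singleton drawn from $\textbf{i}$.

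Next I would treat the right-hand side by invoking Lemma \ref{lemma-SAS2}(2), which applies precisely because $\varphi$ is assumed $\diamond$-restricted: it yields $\textbf{i} \dashV \varphi$ iff $\forall w \in \textbf{i}$: $\{w\} \dashV \varphi$. Since this is the identical condition reached from the left-hand side, chaining the two equivalences gives $\textbf{i} \dashV \diamond \varphi$ iff $\textbf{i} \dashV \varphi$, as required.

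There is no genuine obstacle, since the inductive work has already been discharged in Lemma \ref{lemma-SAS2}. The only point worth flagging is where the hypothesis is actually consumed: the reduction of the right-hand side $\textbf{i} \dashV \varphi$ to the pointwise form $\forall w \in \textbf{i}$: $\{w\} \dashV \varphi$ is exactly the content of Lemma \ref{lemma-SAS2}(2) and requires $\varphi$ to be $\diamond$-restricted, whereas the reduction of the left-hand side $\textbf{i} \dashV \diamond \varphi$ to the same condition is built unconditionally into the rejection clause for $\diamond$. The whole argument amounts to recognizing that these two reductions land on one and the same condition.
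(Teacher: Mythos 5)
Your proof is correct and is essentially the paper's own argument: the paper likewise unfolds $\textbf{i} \dashV \diamond \varphi$ via the rejection clause for $\diamond$ into the pointwise condition $\forall w \in \textbf{i}\colon \{w\} \dashV \varphi$ and then applies Lemma \ref{lemma-SAS2} to identify that condition with $\textbf{i} \dashV \varphi$, reversing the reasoning for the converse. Your additional remark about exactly where the $\diamond$-restricted hypothesis is consumed is accurate and consistent with the paper.
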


\begin{proof} Suppose that $\textbf{i} \dashV \diamond \varphi$. Thus, $\forall w \in \textbf{i}$: $\{w\} \dashV \varphi$. Thus, by Lemma \ref{lemma-SAS2}, $\textbf{i} \dashV \varphi$. The reasoning can be reversed.\end{proof}

\begin{theorem}[Normal Form]\label{normal-form}
For every sentence $\varphi$, there exists $n \geq 0$ and $\diamond$-restricted sentences $\alpha_0$, $\alpha_1$, \ldots, $\alpha_n$ such that for any internally coherent $\textbf{i}$:
$$ \textbf{i} \Vdash \varphi \textrm{ iff } \textbf{i} \Vdash \alpha_0 \wedge \diamond\alpha_1 \wedge \cdots \wedge \diamond \alpha_n $$
\end{theorem}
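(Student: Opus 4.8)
The plan is to prove a stronger, \emph{bilateral} claim by induction on the structure of $\varphi$, valid at \emph{all} information states (not merely internally coherent ones): for every $\varphi$ there exist normal forms $N^{+}(\varphi)$ and $N^{-}(\varphi)$, each of the shape $\gamma_0\wedge\diamond\gamma_1\wedge\cdots\wedge\diamond\gamma_k$ with every $\gamma_\ell$ being $\diamond$-restricted, such that $\textbf{i}\Vdash\varphi$ iff $\textbf{i}\Vdash N^{+}(\varphi)$, and $\textbf{i}\dashV\varphi$ iff $\textbf{i}\Vdash N^{-}(\varphi)$, for every $\textbf{i}$. Two features force this strengthening. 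First, the clause for $\neg$ ties acceptance of $\neg\psi$ to \emph{rejection} of $\psi$, so acceptance normal forms alone cannot be propagated through negation; tracking rejection as well lets me simply set $N^{+}(\neg\psi)=N^{-}(\psi)$ and $N^{-}(\neg\psi)=N^{+}(\psi)$. Second, the clauses for $\diamond$ and for rejected conjunctions evaluate subformulas at singletons $\{w\}$ and at the cells of a cover $\textbf{i}=\textbf{i}_1\cup\textbf{i}_2$, neither of which need be internally coherent, so the induction hypothesis must hold at arbitrary states. The target theorem is then the $N^{+}$ half, restricted to internally coherent $\textbf{i}$.

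The engine of the proof is a handful of closure facts. The base case covers every $\diamond$-restricted $\varphi$ (in particular atoms and \emph{all} $K$-formulas, since $K\chi$ is always $\diamond$-restricted): here $N^{+}(\varphi)=\varphi$ and $N^{-}(\varphi)=\neg\varphi$, using only $\textbf{i}\dashV\varphi$ iff $\textbf{i}\Vdash\neg\varphi$ and closure of the $\diamond$-restricted class under $\neg$. For conjunction I regroup the two conjoined normal forms, merging flat parts (a conjunction of $\diamond$-restricted sentences is $\diamond$-restricted) and collecting diamonds. The crucial fact is for \emph{disjunction}, which I read via its De Morgan definition with the split acceptance clause $\textbf{i}\Vdash X\vee Y$ iff $\textbf{i}=\textbf{i}_1\cup\textbf{i}_2$ with $\textbf{i}_1\Vdash X$ and $\textbf{i}_2\Vdash Y$: I claim
\[ (\alpha_0 \wedge \diamond\alpha_1 \wedge \cdots \wedge \diamond\alpha_n) \vee (\beta_0 \wedge \diamond\beta_1 \wedge \cdots \wedge \diamond\beta_m) \]
is assertorically equivalent to
\[ (\alpha_0 \vee \beta_0) \wedge \diamond(\alpha_0\wedge\alpha_1) \wedge \cdots \wedge \diamond(\alpha_0\wedge\alpha_n) \wedge \diamond(\beta_0\wedge\beta_1) \wedge \cdots \wedge \diamond(\beta_0\wedge\beta_m), \]
again a normal form. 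The forward direction extracts, from any witnessing cover, that every world satisfies $\alpha_0$ or $\beta_0$ (as a singleton) and that each diamond has a witness already meeting the relevant flat constraint; the converse builds an explicit cover, placing each diamond-witness into the appropriate cell and distributing the remaining worlds according to whether they satisfy $\alpha_0$ or $\beta_0$. Flatness of $\diamond$-restricted sentences (Lemma \ref{lemma-SAS2}) is used throughout to pass between set-level and world-level conditions.

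With these closure facts the inductive cases for $\neg,\wedge,\diamond$ fall out. Negation swaps $N^{+}$ and $N^{-}$. For $\wedge$, the acceptance form uses conjunction regrouping, while the rejection form uses the split rejection clause for $\wedge$ together with the (all-states) induction hypothesis at the possibly incoherent cells to rewrite $\textbf{i}\dashV\psi_1\wedge\psi_2$ as $\textbf{i}\Vdash N^{-}(\psi_1)\vee N^{-}(\psi_2)$, which the disjunction identity normalizes. For $\diamond\psi$: on a singleton $\diamond\gamma$ and $\gamma$ are interchangeable, so $N^{+}(\psi)=\gamma_0\wedge\diamond\gamma_1\wedge\cdots\wedge\diamond\gamma_k$ collapses at each $\{w\}$ to its flattening $\gamma^{\ast}:=\gamma_0\wedge\cdots\wedge\gamma_k$; hence $\textbf{i}\Vdash\diamond\psi$ iff some singleton accepts $\gamma^{\ast}$, giving $N^{+}(\diamond\psi)=\diamond\gamma^{\ast}$, and dually $\textbf{i}\dashV\diamond\psi$ iff every singleton accepts the flattening of $N^{-}(\psi)$, yielding a $\diamond$-restricted $N^{-}(\diamond\psi)$ via Lemma \ref{lemma-SAS4} and flatness. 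Empty diamond-free slots are filled with a fixed valid $\diamond$-restricted sentence such as $\neg(p\wedge\neg p)$. Restricting $N^{+}$ to internally coherent $\textbf{i}$ then delivers the stated theorem.

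The main obstacle is the disjunction identity: the split semantics for $\vee$ makes $N_1\vee N_2$ \emph{prima facie} a genuinely disjunctive, non-normal condition, and the work lies in showing it collapses to a single conjunction of a flat part and diamonds. This is precisely where the non-classical behaviour surfaces — it forces, for example, $\diamond p\vee\diamond q$ to be equivalent in acceptance to $\diamond p\wedge\diamond q$ — so the identity must be verified by the explicit cover construction rather than by any classical manipulation. A secondary subtlety, easy to overlook, is that the induction must be run bilaterally and at all states: restricting to internally coherent states at intermediate steps would break the $\diamond$-case and the split rejection case, since neither singletons nor the cells of a cover inherit internal coherence.
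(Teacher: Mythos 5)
Your proposal is correct and takes essentially the same route as the paper's own proof: a bilateral induction carrying both an acceptance and a rejection normal form, with atoms and $K$-formulas as $\diamond$-restricted base cases, diamonds collapsed at singletons via flatness (Lemma \ref{lemma-SAS2}), and exactly the paper's cover identity $(\beta_0 \vee \epsilon_0) \wedge \diamond(\beta_0 \wedge \beta_1) \wedge \cdots$ for rejected conjunctions (your ``disjunction identity''), which you additionally verify by an explicit cover construction where the paper merely asserts the equivalence. The one divergence is that you state the induction hypothesis at \emph{all} states rather than only internally coherent ones; this is a refinement the paper's own proof tacitly requires, since it too applies its hypothesis at singletons and at cells of covers, which need not inherit internal coherence.
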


\begin{proof} See the appendix. \end{proof}

Now for the key results.

\begin{fact}\label{BB-NTrans}
\textbf{Generalized Negative Transparency} holds: $K\neg(p \wedge \diamond q) \DashV \VDash K \neg (p \wedge q)$.
\end{fact}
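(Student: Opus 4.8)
The plan is to prove the stronger claim that the embedded formulas $\neg(p \wedge \diamond q)$ and $\neg(p \wedge q)$ are \emph{assertorically equivalent} --- equivalent at the level of $\Vdash$ for \emph{every} information state, coherent or not --- and then lift this equivalence through the $K$ operator by a congruence argument, finally restricting to internally coherent states to obtain coherent equivalence in both directions. The advantage of aiming for assertoric equivalence first is that it decouples the real content of the fact (the behavior of $\diamond$ under rejection) from the bookkeeping of the $K$ clause.

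The crux is the first step. I would unfold each side using the bilateral clauses: by the clause for $\Vdash \neg$, we have $\textbf{i} \Vdash \neg(p \wedge \diamond q)$ iff $\textbf{i} \dashV p \wedge \diamond q$, and $\textbf{i} \Vdash \neg(p \wedge q)$ iff $\textbf{i} \dashV p \wedge q$. Applying the rejection clause for conjunction, $\textbf{i} \dashV p \wedge \diamond q$ holds iff $\textbf{i}$ splits as $\textbf{i} = \textbf{i}_1 \cup \textbf{i}_2$ with $\textbf{i}_1 \dashV p$ and $\textbf{i}_2 \dashV \diamond q$, while $\textbf{i} \dashV p \wedge q$ holds iff $\textbf{i} = \textbf{i}_1 \cup \textbf{i}_2$ with $\textbf{i}_1 \dashV p$ and $\textbf{i}_2 \dashV q$. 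These two split-conditions differ only in their second conjunct, $\textbf{i}_2 \dashV \diamond q$ versus $\textbf{i}_2 \dashV q$. Since $q$ is atomic, hence $\diamond$-restricted, Lemma \ref{lemma-SAS4} gives $\textbf{i}_2 \dashV \diamond q$ iff $\textbf{i}_2 \dashV q$ for every $\textbf{i}_2$; so the conditions agree witness-for-witness, and therefore $\textbf{i} \Vdash \neg(p \wedge \diamond q)$ iff $\textbf{i} \Vdash \neg(p \wedge q)$ for all $\textbf{i}$.

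For the second step I would observe that assertoric equivalence is a congruence for $K$: since $\textbf{i} \Vdash K\varphi$ iff $\textbf{j} \Vdash \varphi$ for every $w \in \textbf{i}$ and every $\textbf{j} \in Acc(\textbf{k}^w)$, replacing $\varphi$ by an assertorically equivalent $\psi$ leaves the truth value of each quantified conjunct $\textbf{j} \Vdash \varphi$ intact, so $\textbf{i} \Vdash K\varphi$ iff $\textbf{i} \Vdash K\psi$. Instantiating with $\varphi = \neg(p \wedge \diamond q)$ and $\psi = \neg(p \wedge q)$ yields $\textbf{i} \Vdash K\neg(p \wedge \diamond q)$ iff $\textbf{i} \Vdash K\neg(p \wedge q)$ for every $\textbf{i}$. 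Restricting this biconditional to internally coherent $\textbf{i}$ delivers both $K\neg(p \wedge \diamond q) \VDash K\neg(p \wedge q)$ and its converse, which is exactly $K\neg(p \wedge \diamond q) \DashV\VDash K\neg(p \wedge q)$.

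I expect no genuine obstacle here; the single load-bearing observation is that in this bilateral setting $\diamond$ is transparent under rejection (Lemma \ref{lemma-SAS4}), so that rejecting $\diamond q$ is pointwise-equivalent to rejecting $q$. This is precisely what fails in the domain-semantic (classical) treatment discussed in \S\ref{section-rivals}, where $\diamond q$ inside a negated conjunction is evaluated against the ambient state rather than pointwise, breaking \textbf{GeNT}. Isolating and correctly handling that one rejection clause is the only place requiring care.
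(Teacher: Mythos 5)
Your proof is correct. The load-bearing step is the same one the paper relies on: inside the rejection clause for conjunction, the second component of the split satisfies $\textbf{i}_2 \dashV \diamond q$ iff $\textbf{i}_2 \dashV q$ by Lemma \ref{lemma-SAS4}, so the witnesses for rejecting $p \wedge \diamond q$ and for rejecting $p \wedge q$ coincide. Where you diverge is in the packaging. The paper first passes to the reformulated $K$-clause ($\textbf{i} \Vdash K\varphi$ iff $\forall w \in \textbf{i}$, $\forall u \in \textbf{k}^w$: $\textbf{i}^u \Vdash \varphi$), which rests on Lemma \ref{tech-lemma2} (and hence on Lemmas \ref{lemma-SAS-1} and \ref{tech-lemma1} and the stipulated internal coherence of $\textbf{k}^w$), and then performs the split-and-replace manoeuvre at each worldly state $\textbf{i}^u$. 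You instead establish the equivalence at the level of the embedded formulas as an \emph{assertoric equivalence} holding at every information state whatsoever, and then lift it through $K$ by a congruence argument that uses only the original $Acc$-based clause. This buys you two things: your argument is independent of the reformulation machinery, and your conclusion is strictly stronger than the fact as stated --- $K\neg(p \wedge \diamond q)$ and $K\neg(p \wedge q)$ come out assertorically equivalent, with internal coherence playing no role at all (it is only invoked, vacuously, to restate the result as $\DashV \VDash$). The paper's route is shorter in context because the reformulated clause is already in place and is the workhorse for the subsequent proofs of \textbf{K-Veridicality} and the failure of \textbf{Uniformity}; your route makes visible that \textbf{GeNT}, unlike those results, does not actually depend on coherence.
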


\begin{proof} Suppose that $\textbf{i} \Vdash K \neg (p \wedge \diamond q)$. So, $\forall w \in \textbf{i}$, $\forall u \in \textbf{k}^w$: $\textbf{1}^u \Vdash \neg p$ and $\textbf{2}^u \Vdash \neg \diamond q$, where $\textbf{1}^u \cup \textbf{2}^u = \textbf{i}^u$. By Lemma \ref{lemma-SAS4}: $\forall w \in \textbf{i}$, $\forall u \in \textbf{k}^w$: $\textbf{2}^u \Vdash \neg q$. So, $\textbf{i} \Vdash K \neg(p \wedge q)$. The reasoning can be reversed. \end{proof}

\begin{fact}\label{BB-factivity}
\textbf{K-Veridicality} holds: $K\varphi \VDash \varphi$.
\end{fact}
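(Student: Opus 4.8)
The plan is to unwind the definition of coherent consequence and show that any internally coherent $\textbf{i}$ with $\textbf{i} \Vdash K\varphi$ also satisfies $\textbf{i} \Vdash \varphi$. Throughout I would work with the reformulated acceptance clause derived above, namely $\textbf{i} \Vdash K\varphi$ iff $\forall w \in \textbf{i}$, $\forall u \in \textbf{k}^w$: $\textbf{i}^u \Vdash \varphi$, rather than with the raw clause in terms of $Acc(\textbf{k}^w)$.

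First I would reduce the hypothesis to a statement purely about the worldly information at the worlds of $\textbf{i}$. Fix $u \in \textbf{i}$. Since $\textbf{k}^u$ is stipulated to be accessible at $u$, it is veridical at $u$, so $u \in \textbf{k}^u$. Taking $w = u$ in the reformulated clause then yields $\textbf{i}^u \Vdash \varphi$. Hence from $\textbf{i} \Vdash K\varphi$ I obtain $\forall u \in \textbf{i}$: $\textbf{i}^u \Vdash \varphi$.

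Second --- and this is the crux --- I would observe that $\textbf{i}$ is itself an accessible refinement of $\textbf{i}$. It trivially refines itself, and since it is internally coherent (by the restriction built into coherent consequence) and non-empty, it is accessible at any of its members; thus $\textbf{i} \in Acc(\textbf{i})$. I then apply Lemma \ref{tech-lemma2} to the internally coherent state $\textbf{i}$: condition B ($\forall u \in \textbf{i}$: $\textbf{i}^u \Vdash \varphi$) is exactly what was established in the previous step, so condition A ($\forall \textbf{j} \in Acc(\textbf{i})$: $\textbf{j} \Vdash \varphi$) follows. Instantiating A at the refinement $\textbf{i}$ itself gives $\textbf{i} \Vdash \varphi$, as required.

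I expect the main obstacle to be conceptual rather than computational: recognizing that the right tool is Lemma \ref{tech-lemma2} together with the observation $\textbf{i} \in Acc(\textbf{i})$. A naive approach would instead try to pass from $\textbf{i}^u \Vdash \varphi$ for each $u$ to $\bigcup_{u \in \textbf{i}} \textbf{i}^u = \textbf{i} \Vdash \varphi$ via Lemma \ref{lemma-SAS-1} and union-closure, but Lemma \ref{tech-lemma1} delivers only binary unions, so this route stumbles when $\textbf{i}$ is infinite; routing through Lemma \ref{tech-lemma2} sidesteps the issue, since that lemma already packages the required closure over all accessible refinements. It is also worth flagging that the restriction to internally coherent states in the definition of coherent consequence is essential here: internal coherence is precisely what makes $\textbf{i}$ accessible at its own worlds and licenses the appeal to Lemma \ref{tech-lemma2}, and veridicality would fail without it --- mirroring the fact that the counter-model for the unmodified classical account in \S \ref{section-rivals} exploits exactly an incoherent information state.
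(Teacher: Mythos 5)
Your proof is correct, but it takes a genuinely different route from the paper's. The paper invokes Theorem \ref{normal-form}: it writes $\varphi$ (over internally coherent states) as $\alpha_0 \wedge \diamond\alpha_1 \wedge \cdots \wedge \diamond\alpha_n$ with each $\alpha_k$ $\diamond$-restricted, then establishes the conjuncts one by one --- $\textbf{i} \Vdash \alpha_0$ via Lemma \ref{lemma-SAS2} and the veridicality stipulations $w \in \textbf{k}^w$, $w \in \textbf{i}^w$, and $\textbf{i} \Vdash \diamond\alpha_k$ via internal coherence of $\textbf{i}$ (which pulls the witnessing world in $\textbf{i}^w$ back into $\textbf{i}$). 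You bypass the normal form entirely: after extracting $\forall u \in \textbf{i}\colon \textbf{i}^u \Vdash \varphi$ from the reformulated $K$-clause (same first move as the paper, using $w \in \textbf{k}^w$), you apply the B-to-A direction of Lemma \ref{tech-lemma2} to $\textbf{i}$ itself and instantiate at $\textbf{i} \in Acc(\textbf{i})$, an observation the paper never makes explicitly but which is immediate from internal coherence and non-emptiness. Your argument is shorter and uniform in $\varphi$, and your remark about why the naive union-closure route stumbles (Lemma \ref{tech-lemma1} gives only binary unions) is apt; the trade-off is that all the inductive work is hidden inside Lemma \ref{tech-lemma2}, whose proof the paper only sketches, whereas the paper's route makes visible exactly how the $\diamond$-free backbone and the possibility conjuncts are preserved by separate mechanisms, and exercises the Normal Form theorem that the paper develops for independent technical interest anyway. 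Both proofs ultimately rest on structurally similar inductions; yours simply repackages the dependency so that veridicality falls out of machinery already in place before Fact \ref{BB-factivity} is stated.
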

  
\begin{proof} Assume that $\textbf{i}$ is internally coherent and $\textbf{i} \Vdash K\varphi$. So, $\forall w \in \textbf{i}$, $\forall u \in \textbf{k}^w$: $\textbf{i}^u \Vdash \varphi$. By Theorem \ref{normal-form}, there exists $n \geq 1$ and $\diamond$-restricted sentences $\alpha_0, \alpha_1, \ldots, \alpha_n$ such that, $\forall w \in \textbf{i}$, $\forall u \in \textbf{k}^w$: $\textbf{i}^u \Vdash \alpha_0 \wedge \diamond\alpha_1 \wedge \cdots \wedge \diamond \alpha_n $.

We show that $\textbf{i} \Vdash \alpha_0$. Let $w \in \textbf{i}$. Now, as $w \in \textbf{k}^w$ and $\textbf{i}^u \Vdash \alpha_0$ for any $u \in \textbf{k}^w$, we have $\textbf{i}^w \Vdash \alpha_0$. So, by Lemma \ref{lemma-SAS2}, we have $\forall u \in \textbf{i}^w$: $\{u\} \Vdash \alpha_0$. Thus, as $w \in \textbf{i}^w$, we have $\{w\} \Vdash \alpha_0$. Generalizing: $\forall w \in \textbf{i}$: $\{w\} \Vdash \alpha_0$. So, by Lemma \ref{lemma-SAS2}, $\textbf{i} \Vdash \alpha_0$.

We show that $\textbf{i} \Vdash \diamond \alpha_k$ for $1 \leq k \leq n$. Let $w \in \textbf{i}$. Now, for any $u \in \textbf{k}^w$, there exists $v \in \textbf{i}^u$ such that $\{v\} \Vdash \alpha_k$, as $\textbf{i}^u \Vdash \diamond \alpha_k$. As $w \in \textbf{k}^w$, it follows that there exists $v \in \textbf{i}^w$ such that $\{v\} \Vdash \alpha_k$. Thus, as $\textbf{i}$ is internally coherent, $\exists v \in \textbf{i}$ such that $\{v\} \Vdash \alpha_k$. So, $\textbf{i} \Vdash \diamond \alpha_k$.

Altogether: $\textbf{i} \Vdash \alpha_0 \wedge \diamond\alpha_1 \wedge \cdots \wedge \diamond \alpha_n $. So, by Theorem \ref{normal-form}, $\textbf{i} \Vdash \varphi$. \end{proof}

It is instructive to linger on the broad explanation as to why $\diamond p$ is a coherent consequence of $K \diamond p$. Suppose that $\textbf{i}$ is internally coherent and supports $K \diamond p$. Thus, $\textbf{i}$ establishes that Smith's knowledge state establishes that the worldly information establishes $\diamond p$. Thus, the candidates for the worldly information -- those $\textbf{i}$ cannot rule out -- all contain a $p$-world. As $\textbf{i}$ is internally coherent, $\textbf{i}$ cannot itself rule out these worlds. So, $\textbf{i}$ accepts $\diamond p$.

Finally:
 
\begin{fact}\label{BB-speaker-uniformity}
\textbf{Uniformity} fails: $\diamond p \nVDash \neg K \neg p$.
\end{fact}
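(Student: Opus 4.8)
The plan is to refute this coherent consequence by exhibiting a single counter-model: a bounded model $\mathcal{M}$ together with an internally coherent state $\textbf{i}$ for which $\textbf{i} \Vdash \diamond p$ yet $\textbf{i} \nVdash \neg K\neg p$. Since $\VDash$ quantifies only over internally coherent states, producing one such $\textbf{i}$ suffices.

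First I would unfold the two target conditions using the semantic clauses and the reformulation of the $K$-entries. On the one hand, $\textbf{i} \Vdash \diamond p$ merely requires a $p$-world in $\textbf{i}$. On the other hand, $\textbf{i} \Vdash \neg K\neg p$ amounts, via the clause for negation, to $\textbf{i} \dashV K\neg p$, i.e.\ to $\forall w \in \textbf{i}$, $\exists u \in \textbf{k}^w$: $\textbf{i}^u \nVdash \neg p$, which says that at \emph{every} world of $\textbf{i}$ the worldly information left open by Smith's knowledge fails to settle $\neg p$. Hence $\textbf{i} \nVdash \neg K \neg p$ holds exactly when some $w \in \textbf{i}$ has $\textbf{i}^u \Vdash \neg p$ for every $u \in \textbf{k}^w$ --- that is, when at some world of $\textbf{i}$ Smith robustly knows $\neg p$. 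So the counter-model should combine a $p$-world in $\textbf{i}$ with a world at which Smith's knowledge rules $p$ out.

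Concretely, I would take $W = \{w_1, w_2\}$ with $\mathtt{I}(p) = \{w_2\}$, set $\textbf{i} = \{w_1, w_2\}$, and define $\textbf{k}^{w_1} = \textbf{i}^{w_1} = \{w_1\}$ and $\textbf{k}^{w_2} = \textbf{i}^{w_2} = \{w_2\}$. Then $\textbf{i}$ contains the $p$-world $w_2$, so $\{w_2\} \Vdash p$ gives $\textbf{i} \Vdash \diamond p$; and at $w_1$ the sole member of $\textbf{k}^{w_1}$ is $w_1$, with $\textbf{i}^{w_1} = \{w_1\} \Vdash \neg p$ since $w_1 \notin \mathtt{I}(p)$. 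Thus the rejection condition for $K\neg p$ fails at $w_1$, yielding $\textbf{i} \nVdash \neg K\neg p$. Intuitively, the speaker's state $\textbf{i}$ leaves $p$ open, but it does not rule out a world where Smith decisively knows $\neg p$, so it cannot reject $K\neg p$.

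The main obstacle --- really the only thing to watch --- is confirming that this is a legitimate bounded model, i.e.\ that all the structural constraints hold. I would check that each $\textbf{k}^w$ and $\textbf{i}^w$ is accessible at $w$: veridicality is immediate since $w \in \{w\}$, and internal coherence of each singleton $\{w\}$ reduces to $\textbf{i}^w \subseteq \{w\}$, which our choice $\textbf{i}^w = \{w\}$ satisfies. Finally $\textbf{i} = \{w_1,w_2\}$ is internally coherent because $\textbf{i}^{w_1} = \{w_1\}$ and $\textbf{i}^{w_2} = \{w_2\}$ both refine $\textbf{i}$. With these routine checks in place, the model witnesses $\diamond p \nVDash \neg K\neg p$, so \textbf{Uniformity} fails.
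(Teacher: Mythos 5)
Your proposal is correct and is essentially the paper's own proof: the same two-world counter-model (up to relabeling which world satisfies $p$), with singleton $\textbf{k}^w = \textbf{i}^w = \{w\}$ at each world, the same unfolding of $\textbf{i} \nVdash \neg K\neg p$ via the rejection clause for $K$, and the same conclusion. Your explicit verification of the bounded-model constraints and of internal coherence is a welcome addition that the paper leaves as a remark.
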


\begin{proof} Consider any bounded model $\mathcal{M}$ where: (i) $w_1 \in \mathtt{I}(p)$ and $w_2 \notin \mathtt{I}(p)$; (ii) $\textbf{i}^{w_1} = \textbf{k}^{w_1} = \{w_1\}$ and $\textbf{i}^{w_2} = \textbf{k}^{w_2} = \{w_2\}$. Set $\textbf{i} = \{w_1, w_2\}$. Note that $\textbf{i}$ is internally coherent.

By (i), $\{w_1\} \Vdash p$. So, $\exists w \in \textbf{i}$: $\{w\} \Vdash p$. So, $\textbf{i} \Vdash \diamond p$.

By (i), $\{w_2\} \dashV p$. Thus, by Lemma \ref{lemma-SAS2} and (ii), $\textbf{i}^{w_2} \dashV p$. Thus, by (ii), $\forall u \in \textbf{k}^{w_2}$: $\textbf{i}^u \dashV p$. Thus, $\textbf{i} \ndashV K \neg p$. Thus, $\exists w \in \textbf{i}$ such that $\forall u \in \textbf{k}^w$: $\textbf{i}^u \dashV p$. Thus, $\textbf{i} \ndashV K \neg p$. Thus, $\textbf{i} \nVdash \neg K \neg p$.\end{proof}

\nocite{*}
\bibliographystyle{eptcs}
\bibliography{TARK}

\appendix

\section{Appendix: Normal Form for Acceptance Semantics}\label{section-normal-form}

\textbf{Theorem \ref{normal-form}}. For every sentence $\varphi$, there exists $n \geq 0$ and $\diamond$-restricted sentences $\alpha_0, \alpha_1, \ldots, \alpha_n$ such that for any internally coherent $\textbf{i}$:
$$ \textbf{i} \Vdash \varphi \textrm{ iff } \textbf{i} \Vdash \alpha_0 \wedge \diamond\alpha_1 \wedge \cdots \wedge \diamond \alpha_n $$

\begin{proof}
We proceed by induction on sentence structure, with respect to the following stronger property: there exists $m, n \geq 0$ and $\diamond$-restricted sentences $\alpha_0, \alpha_1, \ldots, \alpha_m$ and $\beta_0, \beta_1, \ldots, \beta_n$ such that, for any internally coherent $\textbf{i}$:
\begin{itemize}
\item[] $\textbf{i} \Vdash \varphi$ iff $\textbf{i} \Vdash \alpha_0 \wedge \diamond \alpha_1 \wedge \ldots \wedge \diamond \alpha_m$
\item[] $\textbf{i} \dashV \varphi$ iff $\textbf{i} \Vdash \beta_0 \wedge \diamond \beta_1 \wedge \ldots \wedge \diamond \beta_n$
\end{itemize}

The case for atom $p$ is trivial, as this sentence is itself $\diamond$-restricted: set $m=n=0$, $\alpha_0 = p$ and $\beta_0 = \neg p$.

The case for knowledge ascription $K\varphi$ is trivial, as this sentence is itself $\diamond$-restricted: set $m=n=0$, $\alpha_0 = K\varphi$ and $\beta_0 = \neg K \varphi$.

For the induction hypothesis IH, assume, for arbitrary $\varphi$ and $\psi$, that there exists $m, n, x, y \geq 0$ and $\diamond$-restricted sentences $$\alpha_0, \alpha_1, \ldots, \alpha_m, \beta_0, \beta_1, \ldots, \beta_n, \delta_0, \delta_1, \ldots, \delta_x, \epsilon_0, \epsilon_1, \ldots, \epsilon_y$$ such that, for any internally coherent $\textbf{i}$:

\begin{itemize}
\item[] $\textbf{i} \Vdash \varphi$ iff $\textbf{i} \Vdash \alpha_0 \wedge \diamond \alpha_1 \wedge \ldots \wedge \diamond \alpha_m$
\item[] $\textbf{i} \dashV \varphi$ iff $\textbf{i} \Vdash \beta_0 \wedge \diamond \beta_1 \wedge \ldots \wedge \diamond \beta_n$
\item[] $\textbf{i} \Vdash \psi$ iff $\textbf{i} \Vdash \delta_0 \wedge \diamond \delta_1 \wedge \ldots \wedge \diamond \delta_x$
\item[] $\textbf{i} \dashV \psi$ iff $\textbf{i} \Vdash \epsilon_0 \wedge \diamond \epsilon_1 \wedge \ldots \wedge \diamond \epsilon_y$
\end{itemize}

Using the IH, we can prove the following.

\begin{tabular}{l c l}
$\textbf{i} \Vdash \neg \varphi$ & iff &  $\textbf{i} \dashV \varphi$\\
& iff & $\textbf{i} \Vdash \beta_0 \wedge \diamond \beta_1 \wedge \ldots \wedge \diamond \beta_n$\\
\end{tabular}

\begin{tabular}{l c l}
$\textbf{i} \dashV \neg \varphi$ & iff & $\textbf{i} \Vdash \varphi$\\
& iff & $\textbf{i} \Vdash \alpha_0 \wedge \diamond \alpha_1 \wedge \ldots \wedge \diamond \alpha_m$\\
\end{tabular}

\begin{tabular}{l c l}
$\textbf{i} \Vdash \varphi \wedge \psi$ & iff & $\textbf{i} \Vdash \varphi$ and $\textbf{i} \Vdash \psi$\\
& iff & $\textbf{i} \Vdash \alpha_0 \wedge \diamond \alpha_1 \wedge \ldots \wedge \diamond \alpha_m$ and $\textbf{i} \Vdash \delta_0 \wedge \diamond \delta_1 \wedge \ldots \wedge \diamond \delta_x$\\
& iff & $\textbf{i} \Vdash (\alpha_0 \wedge \delta_0) \wedge \diamond \alpha_1 \wedge \ldots \wedge \diamond \alpha_m \wedge \diamond \delta_1 \wedge \ldots \wedge \diamond \delta_x$ \\
\end{tabular}

\begin{tabular}{l c l}
$\textbf{i} \dashV \varphi \wedge \psi$ & iff & $\exists \textbf{i}_1, \textbf{i}_2$: $\textbf{i} = \textbf{i}_1 \cup \textbf{i}_2$ and $\textbf{i}_1 \dashV \varphi$ and $\textbf{i}_2 \dashV \psi$\\
& iff & $\exists \textbf{i}_1, \textbf{i}_2$: $\textbf{i} = \textbf{i}_1 \cup \textbf{i}_2$ and $\textbf{i}_1 \Vdash \beta_0 \wedge \diamond \beta_1 \wedge \ldots \wedge \diamond \beta_n$\\
&& and $\textbf{i}_2 \Vdash \epsilon_0 \wedge \diamond \epsilon_1 \wedge \ldots \wedge \diamond \epsilon_y$\\
& iff & $\textbf{i} \Vdash (\beta_0 \vee \epsilon_0) \wedge \diamond (\beta_0 \wedge \beta_1) \wedge \ldots \wedge \diamond (\beta_0 \wedge \beta_m)$\\
&& \;\;\;\; $\wedge \diamond (\epsilon_0 \wedge \epsilon_1) \wedge \ldots \wedge \diamond (\epsilon_0 \wedge \epsilon_x)$\\
\end{tabular}

\begin{tabular}{l c l}
$\textbf{i} \Vdash \diamond \varphi$ & iff & $\exists w \in \textbf{i}$: $\{w\} \Vdash \varphi$\\
& iff & $\exists w \in \textbf{i}$: $\{w\} \Vdash \alpha_0 \wedge \diamond \alpha_1 \wedge \ldots \wedge \diamond \alpha_m$\\
& iff & $\exists w \in \textbf{i}$: $\{w\} \Vdash \alpha_0 \wedge \alpha_1 \wedge \ldots \wedge \alpha_m$\\
& iff & $\textbf{i} \Vdash \diamond (\alpha_0 \wedge \alpha_1 \wedge \ldots \wedge \alpha_m)$\\
& iff & $\textbf{i} \Vdash (p \vee \neg p) \wedge \diamond (\alpha_0 \wedge \alpha_1 \wedge \ldots \wedge \alpha_m)$\\
\end{tabular}

\begin{tabular}{l c l}
$\textbf{i} \dashV \diamond \varphi$ & iff & $\forall w \in \textbf{i}$: $\{w\} \dashV \varphi$\\
& iff & $\forall w \in \textbf{i}$: $\{w\} \Vdash \beta_0 \wedge \diamond \beta_1 \wedge \ldots \wedge \diamond \beta_n$\\
& iff & $\forall w \in \textbf{i}$: $\{w\} \Vdash \beta_0 \wedge \beta_1 \wedge \ldots \wedge \beta_n$\\
& iff & $\textbf{i} \Vdash \beta_0 \wedge \beta_1 \wedge \ldots \wedge \beta_n$\\
\end{tabular}

\end{proof}

\end{document}